\documentclass{article}


\usepackage{geometry}



\usepackage{natbib}
\usepackage[utf8]{inputenc} 
\usepackage[T1]{fontenc}    
\usepackage{hyperref}       
\usepackage{url}            
\usepackage{booktabs}       
\usepackage{amsfonts}       
\usepackage{nicefrac}       
\usepackage{microtype}      
\usepackage{xcolor}         
\usepackage{enumitem}
\usepackage{amsmath,amssymb}
\usepackage{amsthm}

\newcommand{\tran}{^\intercal}
\newcommand{\R}{\mathbb{R}}

\newcommand{\rd}{\mathrm{d}}
\newcommand{\iid}{\stackrel{iid}{\sim}}
\newcommand{\bfa}{\mathbf{a}}
\newcommand{\bfb}{\mathbf{b}}
\newcommand{\bfu}{\mathbf{u}}
\newcommand{\bfv}{\mathbf{v}}
\newcommand{\bfw}{\mathbf{w}}

\newcommand{\indc}[1]{{\mathbf{1}_{\left\{{#1}\right\}}}}

\newcommand{\goinf}{\rightarrow\infty}

\newcommand{\unif}{\operatorname{Unif}}
\usepackage{xr}
\usepackage{graphicx}
\usepackage{hyperref}

\newtheorem{assumption}{Assumption}
\newcommand{\EE}[2][]{\mathbb{E}_{#1}\left[#2\right]}
\newcommand{\N}{\mathcal{N}}
\usepackage{subcaption}

\newcommand{\Rom}[1]{\uppercase\expandafter{\romannumeral #1\relax}}
\newcommand{\rom}[1]{\lowercase\expandafter{\romannumeral #1\relax}}

\newtheorem{theorem}{Theorem}[section]
\newtheorem{definition}[theorem]{Definition}
\newtheorem{lemma}{Lemma}

\usepackage{algorithm}
\usepackage{algpseudocode}

\makeatletter
\newcommand*\if@single[3]{%
  \setbox0\hbox{${\mathaccent"0362{#1}}^H$}%
  \setbox2\hbox{${\mathaccent"0362{\kern0pt#1}}^H$}%
  \ifdim\ht0=\ht2 #3\else #2\fi
  }
\newcommand*\rel@kern[1]{\kern#1\dimexpr\macc@kerna}
\newcommand*\widebar[1]{\@ifnextchar^{{\wide@bar{#1}{0}}}{\wide@bar{#1}{1}}}
\newcommand*\wide@bar[2]{\if@single{#1}{\wide@bar@{#1}{#2}{1}}{\wide@bar@{#1}{#2}{2}}}
\newcommand*\wide@bar@[3]{%
  \begingroup
  \def\mathaccent##1##2{%
    \if#32 \let\macc@nucleus\first@char \fi
    \setbox\z@\hbox{$\macc@style{\macc@nucleus}_{}$}%
    \setbox\tw@\hbox{$\macc@style{\macc@nucleus}{}_{}$}%
    \dimen@\wd\tw@
    \advance\dimen@-\wd\z@
    \divide\dimen@ 3
    \@tempdima\wd\tw@
    \advance\@tempdima-\scriptspace
    \divide\@tempdima 10
    \advance\dimen@-\@tempdima
    \ifdim\dimen@>\z@ \dimen@0pt\fi
    \rel@kern{0.6}\kern-\dimen@
    \if#31
      \overline{\rel@kern{-0.6}\kern\dimen@\macc@nucleus\rel@kern{0.4}\kern\dimen@}%
      \advance\dimen@0.4\dimexpr\macc@kerna
      \let\final@kern#2%
      \ifdim\dimen@<\z@ \let\final@kern1\fi
      \if\final@kern1 \kern-\dimen@\fi
    \else
      \overline{\rel@kern{-0.6}\kern\dimen@#1}%
    \fi
  }%
  \macc@depth\@ne
  \let\math@bgroup\@empty \let\math@egroup\macc@set@skewchar
  \mathsurround\z@ \frozen@everymath{\mathgroup\macc@group\relax}%
  \macc@set@skewchar\relax
  \let\mathaccentV\macc@nested@a
  \if#31
    \macc@nested@a\relax111{#1}%
  \else
    \def\gobble@till@marker##1\endmarker{}%
    \futurelet\first@char\gobble@till@marker#1\endmarker
    \ifcat\noexpand\first@char A\else
      \def\first@char{}%
    \fi
    \macc@nested@a\relax111{\first@char}%
  \fi
  \endgroup
}
\makeatother


\title{Langevin Quasi-Monte Carlo}

\usepackage{authblk}
\author{Sifan Liu\thanks{The author thanks Prof. Art Owen for helpful conversations. This
work was partially funded by the NSF grant DMS-2152780 and the Stanford Data Science Scholars program. Email: \texttt{sfliu@stanford.edu}}}
\affil{Department of Statistics, Stanford University} 
\date{}
%


\begin{document}

\maketitle

\begin{abstract}
Langevin Monte Carlo (LMC) and its stochastic gradient versions are powerful algorithms for sampling from complex high-dimensional distributions. To sample from a distribution with density $\pi(\theta)\propto \exp(-U(\theta)) $, LMC iteratively generates the next sample by taking a step in the gradient direction $\nabla U$ with added Gaussian perturbations. Expectations w.r.t. the target distribution $\pi$ are estimated by averaging over LMC samples. In ordinary Monte Carlo, it is well known that the estimation error can be substantially reduced by replacing independent random samples by quasi-random samples like low-discrepancy sequences. In this work, we show that the estimation error of LMC can also be reduced by using quasi-random samples. Specifically, we propose to use completely uniformly distributed (CUD) sequences with certain low-discrepancy property to generate the Gaussian perturbations. Under smoothness and convexity conditions, we prove that LMC with a low-discrepancy CUD sequence achieves smaller error than standard LMC. The theoretical analysis is supported by compelling numerical experiments, which demonstrate the effectiveness of our approach.
\end{abstract}

\section{Introduction}

Sampling from probability distributions is a crucial task in both statistics and machine learning. However, when the target distribution does not permit exact sampling, researchers often rely on Markov chain Monte Carlo (MCMC) methods. These techniques simulate a Markov chain that converges to the target distribution as its stationary distribution. Recently, MCMC samplers based on discretizing the continuous-time Langevin diffusion have become popular, due to its ease of implementation and ability to handle stochastic gradients \citep{welling2011bayesian}.

The primary focus of this work is on the quality of samples generated by Langevin Monte Carlo (LMC) algorithms in terms of estimating the expectation $\EE[\theta\sim\pi]{f(\theta)}$ for some integrand $f$ by sample averages. In the context of Bayesian inference, the target distribution $\pi$ is typically the posterior distribution, and computing the posterior expectation, posterior variance, or confidence intervals are of great interest. In the context of post-selection inference, the target distribution $\pi$ is the probability distribution conditioned on the selection event, and computing the selection-adjusted p-value is the main task. LMC has been widely used in this problem as well \citep{markovic2016bootstrap,shisampling}. In all these situations, the accuracy of the sample average estimator is critical and affects the downstream data analysis.

In traditional Monte Carlo sampling, it is well known that using quasi-Monte Carlo (QMC) samples, instead of independent and identically distributed (i.i.d.) random samples, can lead to significant error reduction. So it is natural to ask whether we can apply QMC techniques to improve Langevin Monte Carlo sampling as well. In this work, we introduce the Langevin quasi-Monte Carlo (LQMC) algorithm, which replaces the i.i.d. random inputs in the LMC algorithm with quasi-random numbers. These quasi-random numbers are carefully designed to sample from the target distribution more evenly and more balanced, leading to improved estimation accuracy.

Not all quasi-Monte Carlo point sets are suitable for simulating Markov chains. Suppose the Markov chain is driven by a sequence of uniform random vectors in the unit cube. A sufficient condition for the sequence is known as \emph{completely uniformly distributed} (CUD). In our implementation of the driving sequence, we use an entire period of a pseudo-random number generator (PRNG). While modern computer simulations often use PRNGs with a large period, such as Mersenne Twister with a period of $2^{19937}-1$, our approach runs through the entire period of a PRNG with a relatively small period in the LMC algorithm. The advantage of using an entire period of a PRNG is that the points are more evenly distributed, which is more desirable for numerical integration. We illustrate the balancing property of an entire PRNG in Figure~\ref{fig: demo points}. 

\begin{figure}
\centering
\includegraphics[width=.6\textwidth]{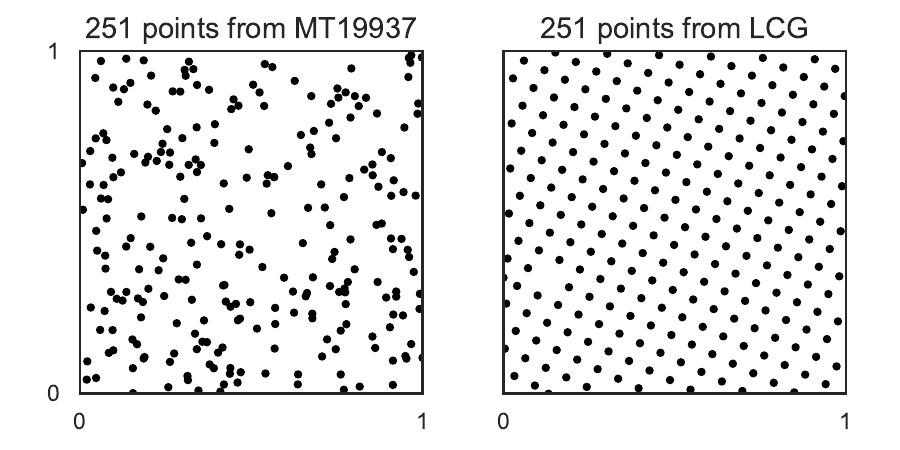}
\caption{Scatter plots of 251 points generated from Mersenne Twister 19937 (left) and 251 points generated from a linear congruential generator (LCG) of period 251. Points from an entire period of a pseudo-random number generator (right) fill the unit square more evenly than the same number of points from a PRNG with a larger period (left).}
\label{fig: demo points}
\end{figure}

The main contributions of this paper are threefold. First, we propose a novel technique of using quasi-random numbers in Langevin-type algorithms, which can be applied to a wide range of such algorithms by substituting i.i.d. random numbers with a sequence of quasi-random numbers. The quasi-random numbers are constructed similarly as usual PRNGs, therefore no extra computational complexity is required. Second, we evaluate the performance of the proposed LQMC algorithm in a variety of numerical experiments, demonstrating that it can significantly reduce the mean squared error (MSE) of traditional LMC by a factor ranging from 2 to 500, depending on the problem. Finally, we provide theoretical analysis showing that LQMC can reduce the Monte Carlo part of the error from $O(n^{-1/2})$ to $O(n^{-1+\delta})$ for any $\delta>0$ in situations where the Markov chain is strongly contracting and the integrand function $f$ is sufficiently regular. This error reduction is consistent with the usual improvement achieved by using quasi-Monte Carlo in place of plain Monte Carlo.

The rest of the paper is organized as follows. In Section~\ref{sec: Background}, we provide some background on LMC and QMC, followed by a review of related work. Section~\ref{sec: lqmc} describes the LQMC algorithm and its implementation details. In Section~\ref{sec: theory}, we present theoretical guarantees for the proposed method. Finally, in Section~\ref{sec: experiments}, we provide empirical results to evaluate the performance of LQMC and compare it with the standard LMC algorithm.

\section{Backgrounds}
\label{sec: Background}

This section provides some background on Langevin Monte Carlo and quasi-Monte Carlo.

\subsection{Langevin Monte Carlo}

Suppose we want to sample from the target distribution $\pi(\theta)\propto \exp(-U(\theta))$ where $\theta\in\R^d$ and $U$ is known as the potential function.
LMC algorithms are based on Euler-Maruyama discretization of the Langevin diffusion $\theta(t)$, which satisfies the stochastic differential equation
\begin{align}
    \rd \theta(t) = -\nabla U(\theta(t)) \rd t + \sqrt 2 \rd W_t,
\label{equ: langevin sde}
\end{align}
where $\{W_t\}_{t\geq 0}$ is a $d$-dimensional standard Brownian motion.
Under mild technical conditions, the Langevin diffusion $\theta(t)$ has $\pi$ as its unique invariant distribution \citep{roberts1996exponential}. With a discretization step size $h$, LMC updates the sample $\theta_k$ by
\begin{align}
\theta_{k+1}\leftarrow \theta_k - h \nabla U(\theta_k) + \sqrt{2h}\xi_{k+1}
\label{equ: LMC update}
\end{align}
where $\xi_k\iid \N(0,I_d)$.

In many applications, we are interested in computing the expectation $\mu:=\EE[\theta\sim\pi]{f(\theta)}$ over $\pi$ for some $\pi$-integrable function $f$.
The LMC estimator of $\mu$ is the sample average
\begin{align*}
    \hat\mu_{n}=\frac1n \sum_{k=1}^n  f(\theta_k),
\end{align*}
where $n$ is the number of iterations.

\citet{teh2016consistency} provide an asymptotic bias-variance decomposition of the MSE of the weighted average $\frac{\sum_{k=1}^n h_k f(\theta_k)}{\sum_{k=1}^n h_k} $ and show that the optimal step size scales as $h_k\asymp k^{-1/3}$, leading to an MSE of order $O(n^{-2/3})$. Here $h_k$ is the step size used at the $k$-th iteration.
\citet{vollmer2016exploration} generalize this result to the non-asymptotic setting with a constant step size $h$. They show that the MSE is of order $O(h^2+\frac{1}{nh})$, where $h^2$ corresponds to the squared bias and $\frac{1}{nh}$ corresponds to the variance.

\subsection{Quasi-Monte Carlo}

QMC is an alternative to Monte Carlo for numerical integration and is well-known for having much higher accuracy than Monte Carlo.
QMC is primarily designed to numerically evaluate the integral $\mu=\int_{[0,1]^d}f(\bfu)\rd\bfu$. It estimates $\mu$ by taking points $\bfu_i\in[0,1]^d$ and let the estimator be
\[
\hat\mu=\frac1n\sum_{i=1}^n f(\bfu_i).
\]
Unlike Monte Carlo which takes $\bfu_i$ to be identically independently distributed (i.i.d.), QMC constructs the point set $\{\bfu_i\}_{i=1}^n$ that aims to minimize the star discrepancy
\begin{align}
D_n^*=D_n^*(\bfu_1,\dots,\bfu_{n})
=\sup_{\bfa\in[0,1]^d}\bigg| \frac1n\sum_{i=1}^{n}1\{\bfu_i\in[\mathbf{0},\bfa)\}-\prod_{j=1}^da_j\bigg|.
\label{equ: start discrepancy}
\end{align}
The star discrepancy measures the uniformity of the point sets by comparing the fraction of points inside $[\mathbf{0},\bfa)$ and the volume $\prod_{j=1}^d a_j$, taking supreme over all the rectangles inside $[0,1]^d$ anchored at $\mathbf{0}$. QMC can generate points with $D_n^*=O(n^{-1} (\log n)^{d-1})$, thus QMC is also known as low-discrepancy sequence. Commonly used QMC points include  Sobol' sequence \citep{sobol1967distribution}, Niederreiter's sequence \citep{niederreiter1987point}, Halton's sequence, and lattice rules. For a comprehensive survey, we refer to the monograph \cite{dick:pill:2010}. If the integrand $f$ has bounded variation in the sense of Hardy and Krause $\|f\|_{\text{HK}}$, then the Koksma-Hlawka inequality (see e.g. \cite{dick:pill:2010}) bounds the integration error by
\begin{align}
|\hat\mu - \mu|\leq D_n^*\cdot \|f\|_{\text{HK}}\leq O(n^{-1} (\log n)^{d-1}).
\label{equ: kh inequality}
\end{align}
While the Koksma-Hlawka inequality shows that QMC is asymptotically better than usual Monte Carlo, it doesn't provide a practical way to estimate the error. Moreover, integrands might have infinite Hardy-Krause variation.

One can apply randomization techniques to QMC to address both problems. Common randomization techniques include random shifts \citep{cranley1976randomization} and scrambling \citep{rtms}. For RQMC samples $\bfu_1,\ldots,\bfu_n$, each $\bfu_i\sim\unif([0,1]^d)$ individually but they have the low-discrepancy property collectively with probability 1. One can estimate the error by multiple independent random replicates. For sufficiently smooth $f$, the scrambled Sobol' sequence has variance $O(n^{-3} (\log n)^{d-1} )$ \citep{snetvar,smoovar}.

\subsection{Related work}

The first attempt to apply quasi-random numbers to simulate stochastic differential equations was made by \cite{hofmann1997quasi}. They showed that if a numerical scheme is weakly convergent with i.i.d. samples, then using completely uniformly distributed (CUD) sequences also leads to consistent estimation. They also demonstrated that certain low-discrepancy sequences are not suitable for simulating SDEs. There have also been some efforts to apply QMC to MCMC. \citet{owen2005quasi} proposed to apply CUD sequences to a Metropolis algorithm and showed that the method is consistent in problems with finite state spaces. \citet{chendickowen} generalized the consistency result to continuous state spaces under the assumption that the Markov chain is a contraction. More recently, \citet{dick2016discrepancy,dick2014discrepancy} proved that there exists constructions of the driving sequence $\{\bfu_k\}_{k\geq 1}$ such that the discrepancy between the empirical distribution of MCMC samples and the target distribution is bounded by $O(n^{-1/2}(\log n)^{1/2})$, the same rate achieved by random inputs.
Another line of applying QMC to Markov chains is known as array-RQMC proposed by \cite{l2008randomized}. Array-RQMC runs in parallel multiple Markov chains, and each iteration involves a complicated reordering of the states so that the low-discrepancy among the chains is maintained. Empirically, it achieves significantly smaller estimation error than usual MCMC, but theoretical guarantees remain a challenging open problem.

There has been a growing interest in using QMC techniques in various machine learning tasks, such as variational inference \citep{buchholz2018quasi,liu2021quasi}, policy learning and evaluation \citep{arnold2022policy}, reinforcement learning with evolution strategies \citep{choromanski2019structured,rowland2018geometrically}, compression of large datasets \citep{dick2021quasi}, example selection in stochastic gradient descent (SGD) \citep{lu2021general}, and deep learning for solving partial differential equations \citep{longo2021higher}.

Numerous efforts have been devoted to improving LMC and stochastic gradient Langevin dynamics (SGLD). To overcome the instability of Euler-Maruyama discretization, various numerical schemes have been proposed, including higher-order integrators \citep{chen2015convergence}, underdamped LMC \citep{cheng2018underdamped}, and stochastic Runge-Kutta diffusion \citep{li2019stochastic}. For SGLD, variance reduction techniques such as SAGA and SVGR \citep{dubey2016variance} and control variates \citep{baker2019control} have been proposed. LMC also provides a useful perspective for optimization, as demonstrated by the analyses in \cite{chen2016bridging,dalalyan2017further,raginsky2017non,xu2018global,erdogdu2018global}. 
Our contribution is orthogonal to all the aforementioned work, as our algorithm only modifies the random numbers used in the algorithm. Therefore, our method can be combined with other algorithms without interference.

\section{QMC for LMC}
\label{sec: lqmc}

In the LMC algorithm, we can think of the Markov chain as being driven by a sequence of uniform variables $\bfu_k$ in the unit cube $[0,1]^d$. For instance, the Gaussian perturbation can be represented as $\xi_k=\Phi^{-1}(\bfu_k)$, where $\Phi^{-1}$ denotes the inverse Gaussian CDF applied element-wise to $\bfu_k$. If a stochastic gradient is employed, the randomness associated with the stochastic gradient can also be expressed as uniform variables. Therefore, we can write the transition of the Markov chain as $\theta_{k+1}=\psi(\theta_k, \bfu_{k+1})$.  In typical computer experiments, $\bfu_k$ are not really i.i.d. but are deterministic pseudo-random numbers. In this section, we will describe an alternative method of generating the pseudo-random numbers $\bfu_k$, which are carefully constructed and can lead to more accurate sample averages.

The idea here is to use point sets that are more evenly distributed such as QMC points, which can lead to significant improvement in the usual Monte Carlo estimation. However, caution is required when using QMC points to simulate an SDE like \eqref{equ: langevin sde}. This is because the correlation between successive QMC samples may introduce undesired behavior in the Markov chain, as demonstrated in \cite[Section 3.2]{tribble2007markov}. To avoid the dependence among successive values, we require that the blocks of points $(v_i,v_{i+1},\ldots,v_{i+d-1})$ for any lag $d$ are uniformly distributed. This notion of uniformity is formally known as completely uniformly distributed (CUD, \cite{korobov1948functions}), which we define next.
.

We say an infinite sequence $\{\bfu_i\}_{i=1}^\infty\subseteq[0,1]^d $ is uniformly distributed on $[0,1]^d$ if the star discrepancy $D^*(\{\bfu_i\}_{i=1}^n)$ goes to 0 as $n\goinf$, where the star discrepancy is defined in Equation~\ref{equ: start discrepancy}.
\begin{definition}[Completely uniformly distributed sequence (CUD)]
An infinite sequence $\{v_i\}_{i=0}^\infty\subset[0,1]$ is called completely uniformly distributed, if for all positive integer $d$, the sequence $\{(v_{k},\ldots,v_{k+d-1})\}_{k=0}^\infty \subseteq \R^d$ is uniformly distributed on $[0,1]^d$. A triangular array $\bfv_n=(v_{n,1},\ldots,v_{n,N_n})$ is called array-CUD, if for all positive integer $d$, \sloppy{$D^*((v_{n,1},\ldots,v_{n,d}),(v_{n,2},\ldots,v_{n,d+1}),\ldots,(v_{n,N_n-d+1 },\ldots,v_{n,N_n}))\to0$} as $n\to\infty$, $N_n\to\infty$.
\end{definition}
In other words, the subsequent $d$-tuples in a CUD sequence are uniformly distributed in the $d$-dimensional unit cube for any positive dimension $d$. Now we are ready to present the main algorithm.

\subsection{LQMC algorithm}
Let $\{v_i\}_{i=0}^\infty $ be a CUD sequence. Let $\bfu_k=(v_{kd},\ldots,v_{(k+1)d-1})\in\R^d$ be the $k$-th non-overlapping $d$-tuple from the sequence ($k\geq 0$). A CUD sequence is often constructed deterministically. They can further be randomized using the Cranley-Patterson (i.e. random shift) rotation \citep{cranley1976randomization}
\[
\bfu_k\leftarrow \bfu_k + \Delta \mod 1,
\]
where $\Delta\sim \unif([0,1]^d) $. The Cranley-Patterson rotation randomly shifts each dimension of $\bfu_k$ by a uniform random number separately.
Then each $\bfu_k$ is uniformly distributed on $[0,1]^d$. If we apply the inverse Gaussian CDF to each coordinate of $\bfu_k$, then $\Phi^{-1}(\bfu_k)\sim\N(0,I_d)$. In the Langevin-type algorithms, we will let $\xi_k=\Phi^{-1}(\bfu_k)$ and use $\xi_k$ as the Gaussian perturbation in the $k$-th iteration.
Specifically, each iteration takes the form
\begin{align*}
    \theta_{k+1}=\theta_k - h \nabla U(\theta_k) + \sqrt{2h}\cdot \Phi^{-1}(\bfu_{k+1}),\quad k\geq 0.
\end{align*}
Thus the transition map is $\psi(\theta,\bfu)=\theta-h\nabla U(\theta)+\sqrt{2h} \Phi^{-1}(\bfu)$. In practice, we can only run finite many iterations. In the following, we will describe how to construct a finite CUD sequence and feed it into the LMC algorithm.

\subsection{Construction of CUD sequences}

A finite CUD (array-CUD) sequence is often implemented by using an entire period of a pseudo random number generator with a small period \citep{tribble2007markov}. There exist other constructions of CUD sequences. For further details, interested readers can refer to \cite{levin1999discrepancy}. We propose to use the linear-feedback shift register (LFSR) provided in \cite{chen2011consistency}, because it has demonstrated good performance and the computational effort required is comparable to other commonly used PRNGs. 

The binary Galois LFSR (Tausworthe generator, \cite{tausworthe1965random}) of order $m$ updates the states $b_i\in\{0,1\}$ recursively by
\begin{align*}
b_i=\sum_{j=0}^{m-1} a_{j} b_{i-m+j} \mod 2,\quad i\geq m
\end{align*}
with initial states $b_0,b_1,\ldots,b_{m-1}$ pre-specified. The $m$-tuple $(b_i,b_{i+1},\ldots,b_{i+m-1}) \in \text{GF}(2)^m $ can only take $2^m$ different values. If there is an $m$-tuple that is all zero, then all $b_i$'s in this sequence must be zero. So the period of the sequence $\{b_i\}_{i\geq 0}$ is at most $n=2^m-1$. Moreover, the period is exactly equal to $2^m-1$ if and only if the characteristic polynomial
\[
x^m + a_{m-1}x^{m-1}+\ldots+a_1x+a_0
\]
is a primitive polynomial over $\text{GP}(2)$ \citep[Lemma 9.1]{niederreiter1992random}. Given the states $\{b_i\}_{i\geq 0}$ and an offset $s>0$ such that $\gcd(s,2^m-1)=1$, $v_i$ is computed with
\[
v_i=\sum_{j=0}^{m-1} b_{si+j} 2^{-j-1},\quad i=0,1,\ldots,2^m-2.
\]
That is, for each $i$, we take the $m$-tuple $(b_{si+j})_{0\leq j<m}$ and interpret it as the binary expansion of $v_i$. For the next step, we jump $s$ bits ahead in the sequence $\{b_i\}_{i\geq 0}$ and use the $m$-tuple starting from $b_{s(i+1)}$. 
\citet{chen2011consistency} provided a table of the LFSR generators for $10\leq m\leq 32$. They searched the offsets so that the LFSR has good equi-distributed properties. Our experiments use the LFSR generators listed there.

Given the sequence $\{v_i\}_{i= 0}^{n-1}$ of length $n$, we repeat it $d$ times and arrange $v_i$'s in the following $n\times d$ matrix
\begin{align}
\label{equ: variate matrix}
\begin{pmatrix}
v_0 & v_1 & \cdots & v_{d-1}\\
v_{d} & v_{d+1} & \cdots & v_{2d-1}\\
\vdots & \vdots & \ddots & \vdots \\
v_{(n-1)d} & v_{(n-1)d+1} & \cdots & v_{nd-1}
\end{pmatrix}.
\end{align}
We run the LMC algorithm $n=2^m-1$ iterations. The $k$-th uniform vector $\bfu_k$ is the $k$-th row of the above matrix. The procedure is summarized in Algorithm~\ref{algo}. 

\begin{algorithm}
\caption{Langevin quasi-Monte Carlo (LQMC)}
\label{algo}
\begin{algorithmic}
\Require{Number of iterations $n=2^m-1$ such that $\gcd(2^m-1,d)=1$, step size $h$, initial value $\theta_0$}

\State Generate an LFSR sequence $\{v_i\}_{i\geq 0}$ of period $2^m-1$.

\State Let $\bfu_k=(v_{(k-1)d},\ldots,v_{kd-1}) \in[0,1]^d$, for $1\leq k\leq n$.

\State Apply Cranley-Patterson rotation (random shift) to $\bfu_k$'s.

\For{$k\gets 1,\ldots,n$}
\State $\theta_{k}\gets \theta_{k-1}-h\nabla U(\theta_{k-1})+\sqrt{2h}\Phi^{-1}(\bfu_k)$
\EndFor

\Ensure{$\theta_1,\ldots,\theta_n $ }
\end{algorithmic}
\end{algorithm}

If $\gcd(n,d)=1$, then each column of the matrix \eqref{equ: variate matrix} contains no repeated values. This means that among the $n=2^m-1$ iterations of the LQMC algorithm, each dimension uses one value in each sub-interval $(\frac{k}{2^m}, \frac{k+1}{2^m} ]$ at most once ($0\leq k\leq 2^m-1$). This perfect one-dimensional stratification is one of the reasons why CUD may achieve smaller estimation error than pseudo-random numbers. If $\gcd(n,d)>1$, then we take $d'$ to be the smallest integer greater than $d$ and co-prime with $n$. We then create the matrix in \eqref{equ: variate matrix} similarly but with $d'$ columns. In the LQMC algorithm, we take $\bfu_k$ to be the $k$-th row of the matrix but only use the first $d$ coordinates.

Algorithm~\ref{algo} may seem to be restricted by having a fixed number of iterations, $n=2^m-1$. However, in practice, the LQMC algorithm can be started with an initial value of $m$. If the chain does not converge after $2^m-1$ iterations, one can continue the chain with another freshly generated LFSR, possibly with a larger period. This allows for flexibility in adjusting the number of iterations based on the convergence of the chain. Additionally, if a burn-in period is required, one can first run the algorithm with an LFSR of a small period to serve as the burn-in stage and then continue with a larger LFSR. Furthermore, running multiple chains with independent random shifts is embarrassingly parallel.
We present the algorithm in the form of the basic LMC algorithm with accurate gradient and constant learning rate. However, as we noted previously, other Langevin-type algorithms can also utilize the CUD sequence directly by substituting the pseudo-random numbers with the LFSR sequence.

\section{Theoretical guarantee}
\label{sec: theory}

Here we study the estimation error $|n^{-1}\sum_{k=1}^n f(\theta_k) -\pi(f)|$ of LQMC for some test function $f$ that is 1-Lipschitz and bounded. As the first attempt to prove the convergence rate of using QMC in LMC, we impose the relatively strong conditions of smoothness and convexity.
\begin{assumption}
\label{assump 1}
The potential function $U$ is $L$-smooth
\begin{align*}
\|\nabla U(\theta) - \nabla U(\theta')\|_2 \leq L\|\theta-\theta'\|_2,\quad \forall\; \theta,\theta',
\end{align*}
and $M$-strongly convex
\begin{align*}
U(\theta')\geq U(\theta)+\nabla U(\theta)\tran (\theta'-\theta)+\frac{M}2 \|\theta'-\theta\|_2^2,\quad \forall\; \theta,\theta'.
\end{align*}
\end{assumption}
We will also assume a constant step size $h$. While LMC with vanishing step sizes converges weakly to the target distribution, in practice a constant step size is often used \cite{vollmer2016exploration,brosse2018promises}. With a constant step size, we can derive a non-asymptotic error bound for LQMC.

Assumption~\ref{assump 1} implies that if the step size $h\leq\frac{2}{L+M} $, then the transition map $\psi$ is a strong contraction with parameter $\rho=1-hM$, i.e.
\begin{align}
    \|\psi(\theta,\bfu) - \psi(\theta',\bfu)\|_2 =\|\theta -\theta'-h(\nabla U(\theta) - \nabla U(\theta')) \|_2 \leq \rho\|\theta-\theta'\|_2.
    \label{equ: contraction}
\end{align}
See e.g. Lemma 2 of \cite{dalalyan2019user}. The strong contraction implies that if we start two chains from $\theta$ and $\theta'$, and use the same random numbers at every step, then the two chains will merge exponentially fast. In other words, the state $\theta_k$ largely depends on the most recent iterations and quickly forgets about the past history. Formally, let $\bfw_k^{(\ell)}=(\bfu_k,\ldots,\bfu_{k-\ell+1}) $ denote the random numbers used in the most recent $\ell$ steps. Define the $\ell$-step transition as
\begin{align*}
\theta_k=\psi_{\ell}(\theta_{k-\ell}, \bfw_k^{(\ell)})
\end{align*}
and let $\bar f_\ell(\bfw_k^{(\ell)})$ denote the value of $f(\theta_k)$ marginalized over $\theta_{k-\ell}\sim \pi$, i.e.
\begin{align*}
\bar f_\ell(\bfw_k^{(\ell)}) = \int f\circ \psi_{\ell}(x, \bfw_k^{(\ell)})\pi(\rd x).
\end{align*}
Thus $\bar f_\ell(\bfw_k^{(\ell)})$ only depends on the most recent $\ell$ iterations.
Due to the strong contraction, $|\bar f_\ell(\bfw_k^{(\ell)}) - f(\theta_k)|$ decays exponentially fast with $\ell$. So for large $\ell$, the estimation error of $n^{-1}\sum_{k=1}^n f(\theta_k) $ is close to the error of $\frac{1}{n-\ell} \sum_{k=\ell+1}^n \bar f_\ell(\bfw_k^{(\ell)}) $. The latter can be viewed as a $d\ell$-dimensional numerical integration scheme based on the point set $\{\bfw_k^{(\ell)}\}_{k=\ell+1}^n$. By leveraging the discrepancy bound of the LFSR sequence and assuming that $\bar f_\ell$ has bounded variation in the sense of Hardy and Krause, we can derive an error bound using the Koksma-Hlawka inequality \eqref{equ: kh inequality}.
Now we state the main error bound and leave the detailed proof in the Appendix~\ref{sec: proof}.
\begin{theorem}[Error bound of LQMC]
\label{thm}
Let Assumption~\ref{assump 1} hold. Define the step size $h\leq \frac{2}{L+M}$, $\rho=1-hM$, $\ell=\lceil (1/2) \log_\rho h \rceil$. 
Let $\theta_1,\ldots,\theta_{n}$ be the output of Algorithm~\ref{algo} which runs $n$ iterations with step size $h\leq\frac{2}{L+M}$. Assume the LFSR sequence $\{v_i\}_{i\geq 0}$ in use has period $n=2^m-1$, offset $s$, and $\text{gcd}(m,n)=\gcd(d\ell,n)=1$.
If $\bar f_\ell$ has bounded variation in the sense of Hardy and Krause, then as $n\goinf$ we have
\begin{align*}
\bigg|\frac1{n}\sum_{k=1}^{n} f(\theta_{k})-\pi(f)\bigg|\leq  C_1 n^{-1+\delta}  +C_2 h^{1/2},\quad \forall\; \delta>0.
\end{align*}
Here $\delta$ hides poly-logarithmic factors, $C_1$ depends on $d,\ell$ and $\|\bar f_\ell\|_{\text{HK}} $, and $C_2=\frac{3\sqrt 2}{2}\frac{L}{M}d+\max_{0\leq k\leq n}\|\theta_k\| + \EE[\pi]{\|\theta\|} $.
\end{theorem}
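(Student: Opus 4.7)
The plan is to decompose the error into three contributions --- exponential forgetting of the chain, discretization bias of LMC, and a quasi-Monte Carlo integration error of the smoothed integrand --- each handled by a different tool. For the prescribed lag $\ell=\lceil(1/2)\log_\rho h\rceil$, I would write
\begin{align*}
\frac1n\sum_{k=1}^n f(\theta_k)-\pi(f) = A_n+B_n+C_n+R_n,
\end{align*}
where $A_n=\frac1n\sum_{k=\ell+1}^n[f(\theta_k)-\bar f_\ell(\bfw_k^{(\ell)})]$ is the forgetting term, $B_n=\frac1n\sum_{k=\ell+1}^n[\bar f_\ell(\bfw_k^{(\ell)})-\int\bar f_\ell(\bfw)d\bfw]$ is the QMC integration error, $C_n=\frac{n-\ell}{n}(\int\bar f_\ell(\bfw)d\bfw-\pi(f))$ is the LMC discretization bias, and $R_n$ collects the $O(\ell/n)$ boundary contribution from the first $\ell$ summands.

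For $A_n$, I would couple the actual state $\theta_{k-\ell}$ with an independent draw $X\sim\pi$ and push both forward with the common noise $\bfw_k^{(\ell)}$. Iterating the contraction \eqref{equ: contraction} $\ell$ times together with $1$-Lipschitzness of $f$ yields the pointwise bound $|f(\theta_k)-\bar f_\ell(\bfw_k^{(\ell)})|\le\rho^\ell(\|\theta_{k-\ell}\|+\EE[\pi]{\|\theta\|})$. Since $\rho^\ell\le h^{1/2}$ by the choice of $\ell$, averaging over $k$ yields the $(\max_k\|\theta_k\|+\EE[\pi]{\|\theta\|})h^{1/2}$ portion of $C_2 h^{1/2}$. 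For $C_n$, the integral $\int\bar f_\ell(\bfw)d\bfw$ equals the expectation of $f$ after $\ell$ ordinary LMC steps initialized from $\pi$. The classical Wasserstein-$2$ bound of Dalalyan (which also underlies the contraction used here) under Assumption~\ref{assump 1} gives a bound of order $L\,d\,h^{1/2}/M$ between this law and $\pi$, and $1$-Lipschitzness of $f$ then yields $|C_n|\le\frac{3\sqrt 2}{2}\frac{L}{M}d\cdot h^{1/2}$, the remaining portion of $C_2 h^{1/2}$.

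The main work, and the main obstacle, is $B_n$. Since $\bar f_\ell$ is assumed to have bounded Hardy-Krause variation in $d\ell$ dimensions, the Koksma-Hlawka inequality \eqref{equ: kh inequality} gives
\begin{align*}
|B_n|\le D_n^*\bigl(\{\bfw_k^{(\ell)}\}_{k=\ell+1}^n\bigr)\cdot\|\bar f_\ell\|_{\text{HK}}.
\end{align*}
Each $\bfw_k^{(\ell)}$ is a window of $d\ell$ consecutive values from the driving LFSR sequence, obtained by sliding the preceding window by $d$ positions. Under the coprimality conditions $\gcd(m,n)=\gcd(d\ell,n)=1$, the discrepancy analysis of Chen (2011) for LFSR sequences with primitive characteristic polynomial provides $D_n^*=O(n^{-1+\delta})$ for any $\delta>0$, the exponent $\delta$ absorbing the poly-logarithmic factors. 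The crux is transferring the standard array-CUD guarantee from sliding-by-$1$ tuples to the sliding-by-$d$ tuples appearing here: since $\gcd(d,n)=\gcd(d\ell,n)=1$, multiplication by $d$ is a bijection on $\mathbb{Z}/n\mathbb{Z}$, so the sliding-by-$d$ starting indices exhaust all residues mod $n$ and the collection $\{\bfw_k^{(\ell)}\}$ is, up to reordering of points and a fixed permutation of coordinates, the same as the consecutive-$d\ell$-tuple set, with discrepancy of the same order. Summing $|A_n|+|B_n|+|C_n|+|R_n|$ and absorbing the $O(\ell/n)=O(\log(1/h)/n)$ boundary contribution into the $n^{-1+\delta}$ term yields the stated bound.
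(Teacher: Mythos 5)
Your proposal is correct and follows essentially the same route as the paper's proof: the identical three-part decomposition into a forgetting term (bounded by iterating the contraction \eqref{equ: contraction} $\ell$ times), a discretization bias $|\pi(f)-\pi P_\ell(f)|$ of order $\frac{L}{M}dh^{1/2}$, and a Koksma--Hlawka bound on the QMC integration error via the LFSR discrepancy, plus $O(\ell/n)$ boundary terms. The only cosmetic differences are that you cite Dalalyan's Wasserstein bound where the paper re-derives the discretization estimate in its Lemma~\ref{lemma: discretization error}, and that you make explicit the sliding-by-$d$ versus sliding-by-$1$ reindexing argument (via $\gcd(d\ell,n)=1$) that the paper leaves implicit.
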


The upper bound consists of two terms. The first term represents the numerical integration error, which arises from the discrepancy of the point set used in the integration scheme. By utilizing low-discrepancy CUD sequences, we can reduce this numerical integration error (the first term) from the standard rate of $O(n^{-1/2})$ to a faster rate of $O(n^{-1+\delta})$ for any $\delta > 0$.
However, it is important to note that when using a constant step size $h$ in LMC, the bias term (second term) does not vanish. This bias term includes not only the discretization error of the Langevin diffusion, but also the difference between $f(\theta_k)$ and its truncated version $\bar f_\ell(\bfw_k^{(\ell)})$. Consequently, the bias term in our analysis is larger than the bias term in \cite{vollmer2016exploration}, which employs different techniques and assumptions based on the Poisson equation.

The theorem's assumption of finite Hardy-Krause variation is a common requirement in error bounds for QMC methods, and it can be challenging to verify in practice. In the next section, we aim to assess the practical performance of the proposed LQMC algorithm through numerical experiments. 

\section{Numerical experiments}
\label{sec: experiments}

To comprehensively evaluate the performance of the algorithm, we will consider both convex and non-convex potentials, both low-dimensional and high-dimensional state spaces, both accurate and stochastic gradients, both smooth and discontinuous integrands, as well as different learning rate schedules. Additional numerical results can be found in the Appendix~\ref{sec: additional numerical}.

\subsection{Bayesian logistic regression}

We first consider the Bayesian logistic model
\begin{align*}
    y_i\mid x_i &\sim \text{Bernoulli}((1+\exp(-x_i\tran\beta))^{-1}),\quad 1\leq i\leq N, \\
    \beta &\sim \N(0,I_d).
\end{align*}
We take $N=20$, $d=10$. The features $x_i$ are generated from $\N(0,\Sigma)$ with $\Sigma_{ij}=2^{-|i-j|}$. The coefficients $\beta$ and the data $y_i$'s are generated from the same model.
We consider the test functions $f(x)=x_j,x_j^2,\indc{x_j>0}$ for $j=1,\ldots,d$. 
The step size $h$ is fixed to $0.001$.

We compute the MSE of the estimator based on usual LMC and the proposed LQMC with CUD sequences and report the MSE averaged over all coordinates and 20 random replicates. We do not have a closed form for the expectations $\EE{f}$, so the ground truth is estimated using a high-accuracy estimator proposed in~\cite{he2023error} using scrambled Sobol' sequence with a very large sample size. 

In Figure~\ref{fig: logistic} (top panel), we present a log-log plot of the MSE against the number of iterations. Across all three test functions, we observe that LQMC reduces the MSE by a factor ranging from 4 to 8. As the number of iterations increases, the curve corresponding to LQMC reaches a plateau. This behavior can be attributed to the discretization error inherent in the unadjusted LMC, which cannot be further reduced by increasing the number of iterations.

In the bottom panel of Figure~\ref{fig: logistic}, we increase the number of observations to $N=100$ and incorporate stochastic gradient estimation in the Langevin algorithm. Specifically, at each iteration, we estimate the gradient using a random subset of 10 observations. The results demonstrate that LQMC still provides a big improvement when $n$ is smaller than $2^{14}$. However, as $n$ surpasses $2^{14}$, we observe that the LQMC curve flattens again. It is worth noting that the improvement achieved by LQMC in this scenario is less pronounced compared to the previous example, primarily due to the presence of noise in the gradient estimates.

\begin{figure}
\centering
\begin{subfigure}{.6\textwidth}
\includegraphics[width=\textwidth]{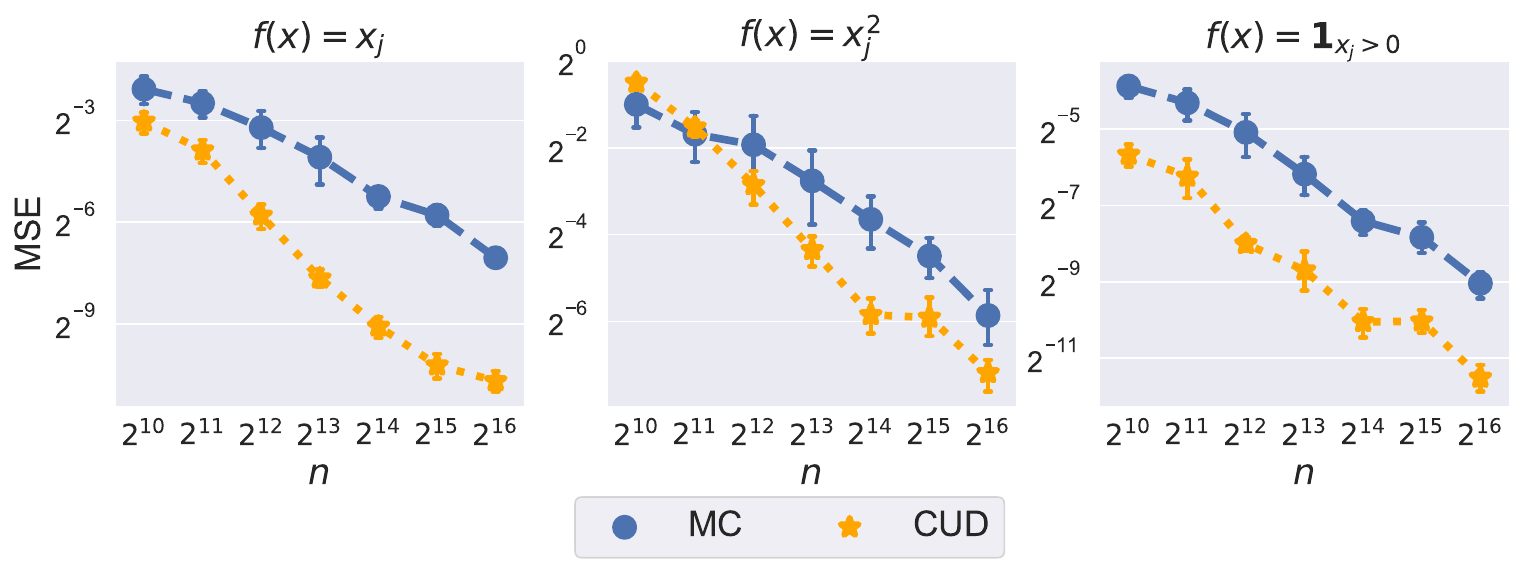}
\end{subfigure}
\begin{subfigure}{.6\textwidth}
\includegraphics[width=\textwidth]{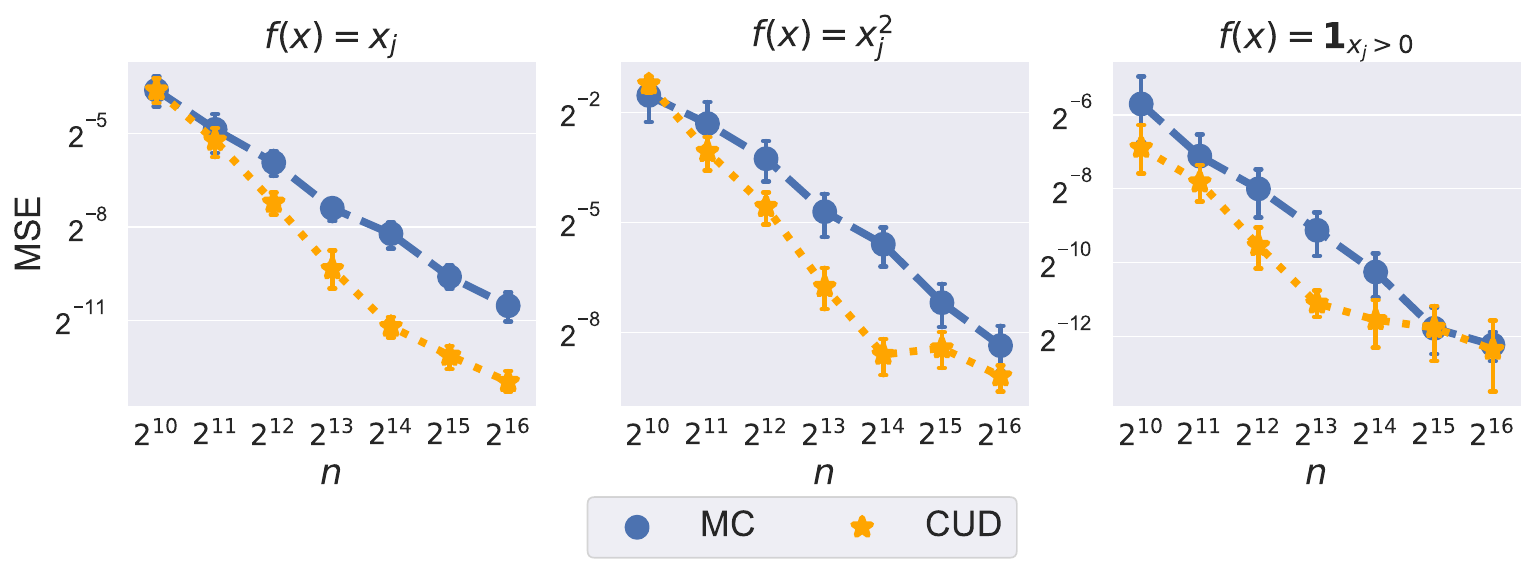}
\end{subfigure}
\caption{Bayesian logistic regression with accurate gradients (top) and stochastic gradients (bottom).}
\label{fig: logistic}
\end{figure}

\subsection{Bayesian linear regression}

Now we try a higher-dimensional example with Bayesian linear regression.  The model is defined as
\begin{align*}
y_i&\sim \N(x_i\tran\beta,\sigma^2=4^{-1}),\quad 1\leq i\leq N,\\
\beta&\sim\N(0,I).
\end{align*}
We take $d=100$ and $N=20$. We generate $x_i\in\R^d$ similarly as in the logistic regression example.  The test functions and step size are also unchanged.
The posterior distribution of $\beta$ has the closed form
$\N\left((\frac{X\tran X}{\sigma^2}  + I)^{-1} \frac{X\tran Y}{\sigma^2} , (\frac{X\tran X}{\sigma^2} + I)^{-1} \right)$.
The results are shown in Figure~\ref{fig: linear}. We see that even at 100 dimension, LQMC still brings a substantial improvement over LMC in terms of MSE. In particular, for the integrand $f(x)=x_j$, LQMC achieves a reduction in MSE of approximately 500-fold compared to LMC. 

\begin{figure}
\centering
\includegraphics[width=.6\textwidth]{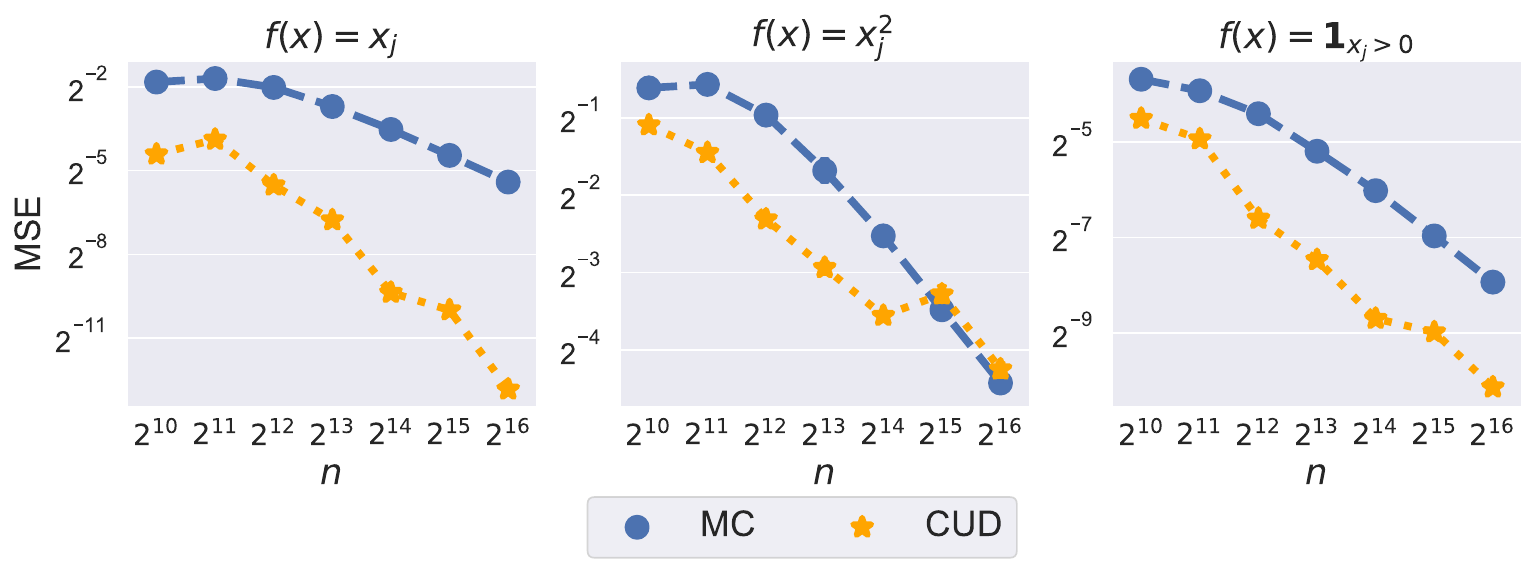}
\caption{Bayesian linear regression in 100 dimensions.}
\label{fig: linear}
\end{figure}

\subsection{A hierarchical Bayesian model}

We consider a hierarchical Bayesian model known as the crossed random effect model
\begin{align*}
&Y_{ij}\sim \N(\mu + a_i + b_j, 1),\quad 1\leq i\leq I,\; 1\leq j\leq J,\\
&\mu\sim\N(0,1),\; a_i\iid\N(0, \sigma_a^2),\; b_j\iid\N(0, \sigma_b^2),\\
&\log(\sigma_a^2),\; \log(\sigma_b^2) \iid\N(0,1).
\end{align*}
The goal is to sample from the posterior distribution of $(\mu,\bfa, \bfb, \log(\sigma_a^2), \log(\sigma_b^2))$, which has dimension $d=I+J+3$. We take $I=3$, $J=5$. We will consider the test functions $f(x)=x_j$ ($1\leq j\leq d$). The ground truth of $\EE{f(x)}$ is estimated by Langevin dynamics with Metropolis adjustments (MALA) using a large sample size.

We will compare the performance of the LQMC algorithm using three different step sizes: a constant step size of $10^{-4}$, a constant step size of $10^{-2}$, and decreasing step sizes with $h_k = c_0(c_1 + k)^{-1/3}$. The choice of $c_0$ and $c_1$ ensures that the step size decreases from $10^{-2}$ to $10^{-4}$ throughout the entire algorithm. The use of the exponent $-1/3$ in the decreasing step sizes is recommended in \cite{teh2016consistency}. The results of these comparisons are presented in Figure~\ref{fig: random effect}.

In the small step size case (left panel), we observe that the errors of LMC and LQMC are initially comparable for small values of $n$. This is because the algorithm converges slowly, and thus the error is dominated by the bias. However, as $n$ increases, the improvement of LQMC becomes evident. In the large step size case (middle panel), the MSE of LQMC is consistently smaller than that of LMC even for small values of $n$. This is because the algorithm converges faster to the target distribution with a larger step size $h$. Therefore, the improvement of LQMC is more pronounced. Interestingly, in this particular example, using decreasing step sizes yields similar accuracy to using a constant step size of $10^{-4}$. It is worth noting that the MSE of LMC does not decrease at a rate of $n^{-2/3}$ as in \cite{teh2016consistency}. This is because the line in the plot does not represent the accuracy against the iteration $k$ within a single training process. Instead, it reflects the accuracy achieved after completing all $n$ iterations of the algorithm, considering different values of $n$.

\begin{figure}
\centering
\includegraphics[width=.6\textwidth]{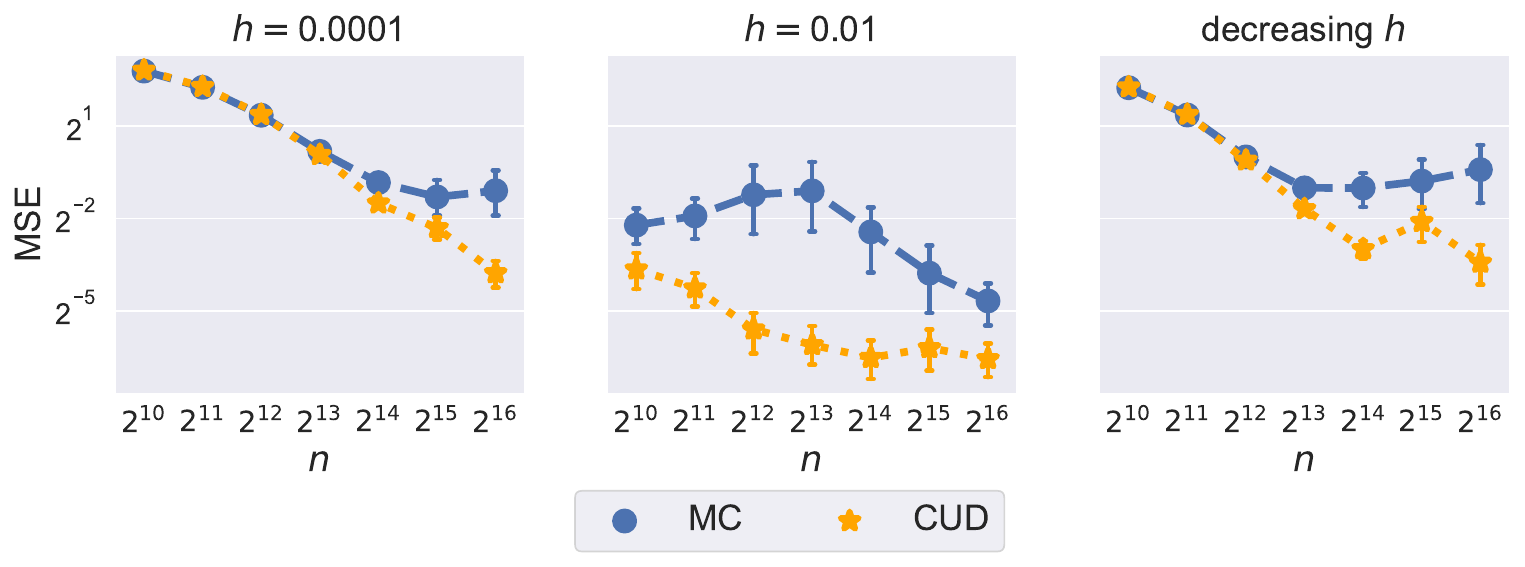}
\caption{Crossed random effect.}
\label{fig: random effect}
\end{figure}

\subsection{Nonconvex potential}

Finally we investigate a double-well potential function
$
U(x)=\frac{1}{4}x^2 -\frac12 \log(1+x^2)
$
from \cite{pages2018weighted}. We know $\EE{x}=0$ and $\EE{\indc{x>0}}=0.5$. The second moment $\EE{x^2}$ is computed by Gaussian quadrature. See the results in Figure~\ref{fig: double well}.
Since the potential has two separate local minimums, it takes longer for the Langevin algorithm to explore the space sufficiently and converge to the target distribution. Once converged, the improvement of LQMC over LMC is still significant.

\begin{figure}
\centering
\includegraphics[width=.6\textwidth]{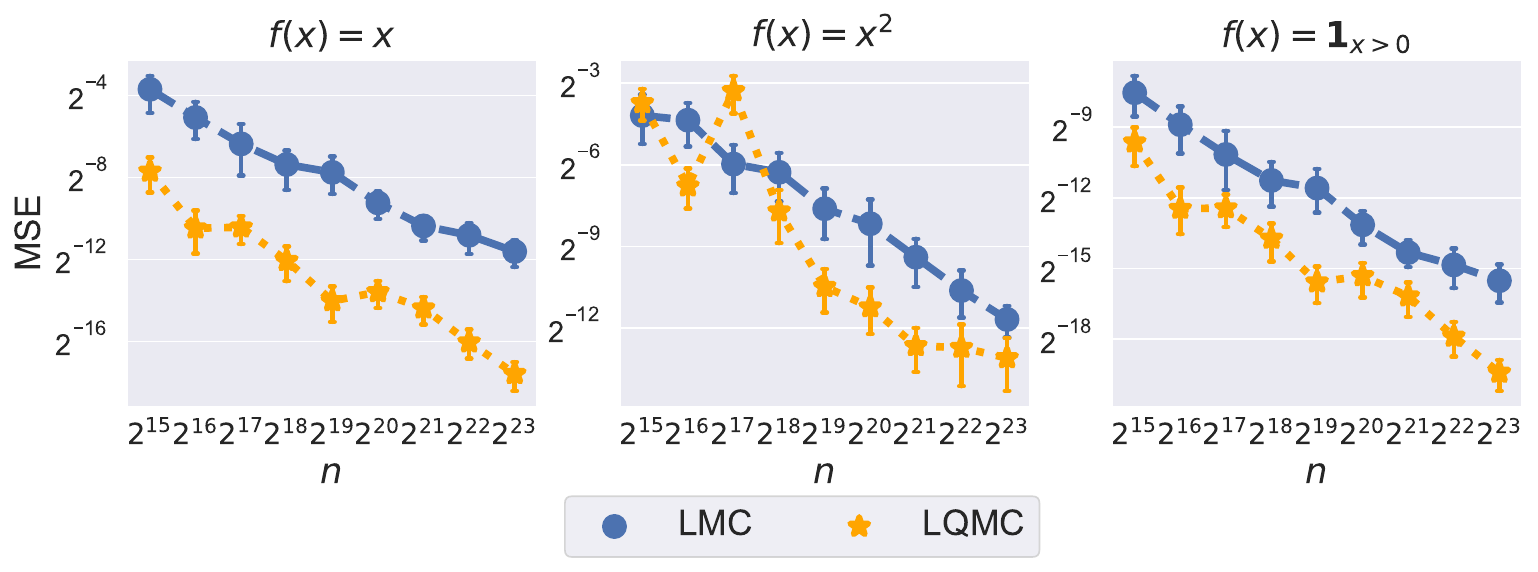}
\caption{Double well potential.}
\label{fig: double well}
\end{figure}

\medskip

\bibliographystyle{apalike}
\bibliography{main.bbl}

\begin{thebibliography}{}

\bibitem[Arnold et~al., 2022]{arnold2022policy}
Arnold, S.~M., L’Ecuyer, P., Chen, L., Chen, Y.-F., and Sha, F. (2022).
\newblock Policy learning and evaluation with randomized quasi-{M}onte {C}arlo.
\newblock In {\em International Conference on Artificial Intelligence and
  Statistics}, pages 1041--1061. PMLR.

\bibitem[Baker et~al., 2019]{baker2019control}
Baker, J., Fearnhead, P., Fox, E.~B., and Nemeth, C. (2019).
\newblock Control variates for stochastic gradient {MCMC}.
\newblock {\em Statistics and Computing}, 29:599--615.

\bibitem[Brosse et~al., 2018]{brosse2018promises}
Brosse, N., Durmus, A., and Moulines, E. (2018).
\newblock The promises and pitfalls of stochastic gradient {L}angevin dynamics.
\newblock {\em Advances in Neural Information Processing Systems}, 31.

\bibitem[Buchholz et~al., 2018]{buchholz2018quasi}
Buchholz, A., Wenzel, F., and Mandt, S. (2018).
\newblock Quasi-{M}onte {C}arlo variational inference.
\newblock In {\em International Conference on Machine Learning}, pages
  668--677. PMLR.

\bibitem[Chen et~al., 2016]{chen2016bridging}
Chen, C., Carlson, D., Gan, Z., Li, C., and Carin, L. (2016).
\newblock Bridging the gap between stochastic gradient {MCMC} and stochastic
  optimization.
\newblock In {\em Artificial Intelligence and Statistics}, pages 1051--1060.
  PMLR.

\bibitem[Chen et~al., 2015]{chen2015convergence}
Chen, C., Ding, N., and Carin, L. (2015).
\newblock On the convergence of stochastic gradient {MCMC} algorithms with
  high-order integrators.
\newblock {\em Advances in neural information processing systems}, 28.

\bibitem[Chen, 2011]{chen2011consistency}
Chen, S. (2011).
\newblock {\em Consistency and convergence rate of {M}arkov chain quasi {M}onte
  {C}arlo with examples}.
\newblock PhD thesis, Stanford University.

\bibitem[Chen et~al., 2011]{chendickowen}
Chen, S., Dick, J., and Owen, A. (2011).
\newblock Consistency of {M}arkov chain quasi-{M}onte {C}arlo on continuous
  state spaces.
\newblock {\em The Annals of Statistics}, 39(2):673--701.

\bibitem[Cheng et~al., 2018]{cheng2018underdamped}
Cheng, X., Chatterji, N.~S., Bartlett, P.~L., and Jordan, M.~I. (2018).
\newblock Underdamped {L}angevin {MCMC}: A non-asymptotic analysis.
\newblock In {\em Conference on learning theory}, pages 300--323. PMLR.

\bibitem[Choromanski et~al., 2019]{choromanski2019structured}
Choromanski, K., Pacchiano, A., Parker-Holder, J., and Tang, Y. (2019).
\newblock Structured {M}onte {C}arlo sampling for nonisotropic distributions
  via determinantal point processes.
\newblock {\em arXiv preprint arXiv:1905.12667}.

\bibitem[Cranley and Patterson, 1976]{cranley1976randomization}
Cranley, R. and Patterson, T.~N. (1976).
\newblock Randomization of number theoretic methods for multiple integration.
\newblock {\em SIAM Journal on Numerical Analysis}, 13(6):904--914.

\bibitem[Dalalyan, 2017]{dalalyan2017further}
Dalalyan, A. (2017).
\newblock Further and stronger analogy between sampling and optimization:
  {L}angevin {M}onte {C}arlo and gradient descent.
\newblock In {\em Conference on Learning Theory}, pages 678--689. PMLR.

\bibitem[Dalalyan and Karagulyan, 2019]{dalalyan2019user}
Dalalyan, A.~S. and Karagulyan, A. (2019).
\newblock User-friendly guarantees for the {Langevin Monte Carlo} with
  inaccurate gradient.
\newblock {\em Stochastic Processes and their Applications},
  129(12):5278--5311.

\bibitem[Dick and Feischl, 2021]{dick2021quasi}
Dick, J. and Feischl, M. (2021).
\newblock A quasi-monte carlo data compression algorithm for machine learning.
\newblock {\em Journal of Complexity}, 67:101587.

\bibitem[Dick and Pillichshammer, 2010]{dick:pill:2010}
Dick, J. and Pillichshammer, F. (2010).
\newblock {\em Digital Sequences, Discrepancy and Quasi-{Monte Carlo}
  Integration}.
\newblock Cambridge University Press, Cambridge.

\bibitem[Dick and Rudolf, 2014]{dick2014discrepancy}
Dick, J. and Rudolf, D. (2014).
\newblock Discrepancy estimates for variance bounding {M}arkov chain
  quasi-{M}onte {C}arlo.
\newblock {\em Electron. J. Probab}, 19(105):1--24.

\bibitem[Dick et~al., 2016]{dick2016discrepancy}
Dick, J., Rudolf, D., and Zhu, H. (2016).
\newblock Discrepancy bounds for uniformly ergodic {M}arkov chain quasi-{M}onte
  {C}arlo.
\newblock {\em Annals of Applied Probability}, 26(5):3178--3205.

\bibitem[Dubey et~al., 2016]{dubey2016variance}
Dubey, K.~A., J~Reddi, S., Williamson, S.~A., Poczos, B., Smola, A.~J., and
  Xing, E.~P. (2016).
\newblock Variance reduction in stochastic gradient {L}angevin dynamics.
\newblock {\em Advances in neural information processing systems}, 29.

\bibitem[Erdogdu et~al., 2018]{erdogdu2018global}
Erdogdu, M.~A., Mackey, L., and Shamir, O. (2018).
\newblock Global non-convex optimization with discretized diffusions.
\newblock {\em Advances in Neural Information Processing Systems}, 31.

\bibitem[He et~al., 2023]{he2023error}
He, Z., Zheng, Z., and Wang, X. (2023).
\newblock On the error rate of importance sampling with randomized
  quasi-{M}onte {C}arlo.
\newblock {\em SIAM Journal on Numerical Analysis}, 61(2):515--538.

\bibitem[Hofmann and Math{\'e}, 1997]{hofmann1997quasi}
Hofmann, N. and Math{\'e}, P. (1997).
\newblock On quasi-{M}onte {C}arlo simulation of stochastic differential
  equations.
\newblock {\em Mathematics of computation}, 66(218):573--589.

\bibitem[Korobov, 1948]{korobov1948functions}
Korobov, N. (1948).
\newblock On functions with uniformly distributed fractional parts.
\newblock In {\em Dokl. Akad. Nauk SSSR}, volume~62, pages 21--22.

\bibitem[L'Ecuyer et~al., 2008]{l2008randomized}
L'Ecuyer, P., L{\'e}cot, C., and Tuffin, B. (2008).
\newblock A randomized quasi-{M}onte {C}arlo simulation method for {M}arkov
  chains.
\newblock {\em Operations Research}, 56(4):958--975.

\bibitem[Levin, 1999]{levin1999discrepancy}
Levin, M.~B. (1999).
\newblock Discrepancy estimates of completely uniformly distributed and
  pseudorandom number sequences.
\newblock {\em International Mathematics Research Notices},
  1999(22):1231--1251.

\bibitem[Li et~al., 2019]{li2019stochastic}
Li, X., Wu, Y., Mackey, L., and Erdogdu, M.~A. (2019).
\newblock Stochastic {R}unge-{K}utta accelerates {L}angevin {M}onte {C}arlo and
  beyond.
\newblock {\em Advances in neural information processing systems}, 32.

\bibitem[Liu and Owen, 2021]{liu2021quasi}
Liu, S. and Owen, A.~B. (2021).
\newblock Quasi-{M}onte {C}arlo quasi-{M}ewton in variational {B}ayes.
\newblock {\em The Journal of Machine Learning Research}, 22(1):11043--11065.

\bibitem[Longo et~al., 2021]{longo2021higher}
Longo, M., Mishra, S., Rusch, T.~K., and Schwab, C. (2021).
\newblock Higher-order quasi-monte carlo training of deep neural networks.
\newblock {\em SIAM Journal on Scientific Computing}, 43(6):A3938--A3966.

\bibitem[Lu et~al., 2021]{lu2021general}
Lu, Y., Meng, S.~Y., and De~Sa, C. (2021).
\newblock A general analysis of example-selection for stochastic gradient
  descent.
\newblock In {\em International Conference on Learning Representations}.

\bibitem[Markovic and Taylor, 2016]{markovic2016bootstrap}
Markovic, J. and Taylor, J. (2016).
\newblock Bootstrap inference after using multiple queries for model selection.
\newblock {\em arXiv preprint arXiv:1612.07811}.

\bibitem[Niederreiter, 1987]{niederreiter1987point}
Niederreiter, H. (1987).
\newblock Point sets and sequences with small discrepancy.
\newblock {\em Monatshefte f{\"u}r Mathematik}, 104:273--337.

\bibitem[Niederreiter, 1992]{niederreiter1992random}
Niederreiter, H. (1992).
\newblock {\em Random Number Generation and Quasi-Monte Carlo Methods}.
\newblock SIAM.

\bibitem[Owen, 1995]{rtms}
Owen, A.~B. (1995).
\newblock Randomly permuted $(t,m,s)$-nets and $(t,s)$-sequences.
\newblock In {\em Monte Carlo and Quasi-Monte Carlo Methods in Scientific
  Computing}, pages 299--317, New York. Springer-Verlag.

\bibitem[Owen, 1997a]{snetvar}
Owen, A.~B. (1997a).
\newblock {Monte Carlo} variance of scrambled net quadrature.
\newblock {\em SIAM Journal of Numerical Analysis}, 34(5):1884--1910.

\bibitem[Owen, 1997b]{smoovar}
Owen, A.~B. (1997b).
\newblock Scrambled net variance for integrals of smooth functions.
\newblock {\em Annals of Statistics}, 25(4):1541--1562.

\bibitem[Owen and Tribble, 2005]{owen2005quasi}
Owen, A.~B. and Tribble, S.~D. (2005).
\newblock A quasi-{M}onte {C}arlo {M}etropolis algorithm.
\newblock {\em Proceedings of the National Academy of Sciences},
  102(25):8844--8849.

\bibitem[Pag{\`e}s and Panloup, 2018]{pages2018weighted}
Pag{\`e}s, G. and Panloup, F. (2018).
\newblock Weighted multilevel langevin simulation of invariant measures.
\newblock {\em Annals of Applied Probability}, 28(6):3358--3417.

\bibitem[Raginsky et~al., 2017]{raginsky2017non}
Raginsky, M., Rakhlin, A., and Telgarsky, M. (2017).
\newblock Non-convex learning via stochastic gradient {L}angevin dynamics: a
  nonasymptotic analysis.
\newblock In {\em Conference on Learning Theory}, pages 1674--1703. PMLR.

\bibitem[Roberts and Tweedie, 1996]{roberts1996exponential}
Roberts, G.~O. and Tweedie, R.~L. (1996).
\newblock Exponential convergence of {L}angevin distributions and their
  discrete approximations.
\newblock {\em Bernoulli}, pages 341--363.

\bibitem[Rowland et~al., 2018]{rowland2018geometrically}
Rowland, M., Choromanski, K.~M., Chalus, F., Pacchiano, A., Sarlos, T., Turner,
  R.~E., and Weller, A. (2018).
\newblock Geometrically coupled monte carlo sampling.
\newblock {\em Advances in Neural Information Processing Systems}, 31.

\bibitem[Shi et~al., 2022]{shisampling}
Shi, J., Liu, C., and Mackey, L. (2022).
\newblock Sampling with mirrored stein operators.
\newblock In {\em International Conference on Learning Representations}.

\bibitem[Sobol', 1967]{sobol1967distribution}
Sobol', I.~M. (1967).
\newblock On the distribution of points in a cube and the approximate
  evaluation of integrals.
\newblock {\em Zhurnal Vychislitel'noi Matematiki i Matematicheskoi Fiziki},
  7(4):784--802.

\bibitem[Tausworthe, 1965]{tausworthe1965random}
Tausworthe, R.~C. (1965).
\newblock Random numbers generated by linear recurrence modulo two.
\newblock {\em Mathematics of Computation}, 19(90):201--209.

\bibitem[Teh et~al., 2016]{teh2016consistency}
Teh, Y.~W., Thiery, A.~H., and Vollmer, S.~J. (2016).
\newblock Consistency and fluctuations for stochastic gradient {L}angevin
  dynamics.
\newblock {\em Journal of Machine Learning Research}, 17.

\bibitem[Tribble, 2007]{tribble2007markov}
Tribble, S.~D. (2007).
\newblock {\em Markov chain {M}onte {C}arlo algorithms using completely
  uniformly distributed driving sequences}.
\newblock PhD thesis, Citeseer.

\bibitem[Vollmer et~al., 2016]{vollmer2016exploration}
Vollmer, S.~J., Zygalakis, K.~C., and Teh, Y.~W. (2016).
\newblock Exploration of the (non-) asymptotic bias and variance of stochastic
  gradient {L}angevin dynamics.
\newblock {\em The Journal of Machine Learning Research}, 17(1):5504--5548.

\bibitem[Welling and Teh, 2011]{welling2011bayesian}
Welling, M. and Teh, Y.~W. (2011).
\newblock Bayesian learning via stochastic gradient {L}angevin dynamics.
\newblock In {\em Proceedings of the 28th international conference on machine
  learning (ICML-11)}, pages 681--688.

\bibitem[Xu et~al., 2018]{xu2018global}
Xu, P., Chen, J., Zou, D., and Gu, Q. (2018).
\newblock Global convergence of {L}angevin dynamics based algorithms for
  nonconvex optimization.
\newblock {\em Advances in Neural Information Processing Systems}, 31.

\end{thebibliography}


\appendix

\section{Proofs}
\label{sec: proof}
\subsection{Proof of Theorem \ref{thm}}

We start by decomposing the error $|\frac1n\sum_{k=1}^n f(\theta_k) - \pi(f) | $ into three parts
\begin{align*}
\bigg|\frac1n\sum_{k=1}^n f(\theta_k) - \pi(f) \bigg| 
&\leq  \bigg|\frac1n\sum_{k=\ell+1}^n f(\theta_k) -\frac1n\sum_{k=\ell+1}^n \bar f_\ell(\bfw_k^{(\ell)}) \bigg| + 
\bigg|\frac1n\sum_{k=\ell+1}^n \bar f_\ell(\bfw_k^{(\ell)}) - \pi(f) \bigg| +  
\frac{\ell}{n}2\|f\|_\infty\\
&=(\Rom{1}) + (\Rom{2}) + \frac{2\ell}{n}\|f\|_\infty.
\end{align*}

We first upper bound $(\Rom{1})$.
\begin{lemma}[Upper bound of $(\Rom{1})$; adapted from Lemma 6.1.4 of \cite{chen2011consistency}]
\label{lemma: bound error 1}
If the transition map $\psi$ is a contraction with parameter $\rho$ and if $f$ is 1-Lipschitz, then
\[
|\bar f_\ell(\bfw_k^{(\ell)})-f(\theta_k)|\leq \left(\max_{0\leq i\leq n} \|\theta_i\| +\EE[\pi]{\|\theta\|} \right) \rho^\ell.
\]
\end{lemma}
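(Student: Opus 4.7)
The plan is to express the difference $\bar f_\ell(\bfw_k^{(\ell)}) - f(\theta_k)$ as an integral against $\pi$ of the difference of two trajectories that share the same driving sequence $\bfw_k^{(\ell)}$ but start from different initial points (namely $x \sim \pi$ versus the actual state $\theta_{k-\ell}$). Since the driving noise is identical on both trajectories, repeated application of the one-step contraction \eqref{equ: contraction} will shrink the initial gap by a factor of $\rho$ per step, giving $\rho^\ell$ after $\ell$ steps. The Lipschitz hypothesis on $f$ then transfers the state bound to a bound on the integrand.

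Concretely, first I would write
\begin{align*}
\bar f_\ell(\bfw_k^{(\ell)}) - f(\theta_k)
= \int \bigl[f\bigl(\psi_\ell(x,\bfw_k^{(\ell)})\bigr) - f\bigl(\psi_\ell(\theta_{k-\ell},\bfw_k^{(\ell)})\bigr)\bigr]\,\pi(\mathrm{d}x),
\end{align*}
using the identity $\theta_k = \psi_\ell(\theta_{k-\ell}, \bfw_k^{(\ell)})$ and the definition of $\bar f_\ell$. Next I would apply the 1-Lipschitz property of $f$ inside the integral to pass to $\|\psi_\ell(x, \bfw_k^{(\ell)}) - \psi_\ell(\theta_{k-\ell}, \bfw_k^{(\ell)})\|_2$, and then iterate the single-step contraction inequality \eqref{equ: contraction} $\ell$ times (each single step uses the same $\bfu$ on both trajectories, which is precisely the setting of \eqref{equ: contraction}) to conclude
\begin{align*}
\|\psi_\ell(x, \bfw_k^{(\ell)}) - \psi_\ell(\theta_{k-\ell}, \bfw_k^{(\ell)})\|_2 \le \rho^\ell \|x - \theta_{k-\ell}\|_2.
\end{align*}

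Finally, I would apply the triangle inequality $\|x - \theta_{k-\ell}\|_2 \le \|x\| + \|\theta_{k-\ell}\|$, integrate against $\pi$, and upper bound $\|\theta_{k-\ell}\|$ by $\max_{0 \le i \le n} \|\theta_i\|$ to obtain the stated bound. No step is really an obstacle here: the only mild subtlety is justifying that iterating \eqref{equ: contraction} really does give a $\rho^\ell$ factor for $\psi_\ell$ when both trajectories use the same multi-step driving sequence $\bfw_k^{(\ell)}$, which follows immediately by induction on $\ell$ since at each intermediate step the two trajectories share the same $\bfu$-coordinate from $\bfw_k^{(\ell)}$. The rest is triangle inequality and the Lipschitz hypothesis.
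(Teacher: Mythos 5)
Your proposal is correct and matches the paper's own proof essentially step for step: the paper also writes the difference as an integral against $\pi$ of two trajectories driven by the same $\bfw_k^{(\ell)}$, applies the Lipschitz bound, peels off the contraction one step at a time to accumulate the $\rho^\ell$ factor, and finishes with the triangle inequality and the bound $\|\theta_{k-\ell}\|\leq \max_{0\leq i\leq n}\|\theta_i\|$. No meaningful difference in approach.
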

\begin{proof}[Proof of Lemma~\ref{lemma: bound error 1}]
Note that
\begin{align*}
|\bar f_\ell(\bfw_k^{(\ell)})-f(\theta_k)|&\leq \int |f(\psi_\ell(\theta,\bfw_k^{(\ell)})) - f(\psi_\ell(\theta_{k-\ell},\bfw_k^{(\ell)}))| \pi(\rd\theta)\\
&\leq \int \|(\psi_\ell(\theta,\bfw_k^{(\ell)})) - (\psi_\ell(\theta_{k-\ell},\bfw_k^{(\ell)}))\| \pi(\rd\theta)\\
&\leq \rho \int \|(\psi_{\ell-1}(\theta,\bfw_{k-1}^{(\ell-1)})) - (\psi_{\ell-1}(\theta_{k-\ell},\bfw_{k-1}^{(\ell-1)}))\| \pi(\rd\theta)\\
&\leq \rho^\ell \int \|\theta - \theta_{k-\ell}\| \pi(\rd\theta)\\
&\leq \rho^\ell (\max_{0\leq i\leq n} \|\theta_i\| +\EE[\pi]{\|\theta\|}).
\end{align*}
\end{proof}

To bound $(\Rom{2})$, note that $\frac1{n-\ell}\sum_{k=\ell+1}^n \bar f_\ell(\bfw_k^{(\ell)}) $ is estimating 
\begin{align*}
    \EE{\bar f_\ell(\bfw^{(\ell)}) } = \int \psi_\ell(\theta, \bfw^{(\ell)}) \pi(\rd \theta ) \rd \bfw^{(\ell)}=:\pi P_\ell(f).
\end{align*}
Here, $\pi P_\ell$ denote the distribution of the $\ell$-step state $\theta_\ell$ starting from $\theta_0\sim \pi$.
So we have the further decomposition
\begin{align*}
(\Rom{2})&\leq \bigg|\frac1{n-\ell}\sum_{k=\ell+1}^n \bar f_\ell(\bfw_k^{(\ell)}) - \pi(f) \bigg| + \frac{\ell}{n-\ell}\|f\|_\infty \\
&\leq | \pi(f) - \pi P_\ell(f)| + \bigg|\frac1{n-\ell}\sum_{k=\ell+1}^n \bar f_\ell(\bfw_k^{(\ell)}) - \pi P_\ell(f)  \bigg| +\frac{\ell}{n-\ell}\|f\|_\infty \\
&\leq (\Rom{2})' + (\Rom {2})'' + \frac{\ell}{n-\ell}\|f\|_\infty.
\end{align*}
The first term $(\Rom{2})'$ is due to the discretization in time. The second term $(\Rom{2})''$ is the numerical integration error. 

To bound $(\Rom{2})'$, we use the following result.
\begin{lemma}[Upper bound on discretization error $(\Rom{2})'$]
    \label{lemma: discretization error}
    Under Assumption~\ref{assump 1}, we have for $f$ 1-Lipschitz,
    \begin{align*}
    |\pi(f)- \pi P_\ell (f)|\leq \frac{3\sqrt 2}{2}\frac{L}{M} h^{1/2}d .
    \end{align*}
\end{lemma}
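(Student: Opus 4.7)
The plan is to reduce the bound on $|\pi(f) - \pi P_\ell(f)|$ to a Wasserstein distance and then control that via a synchronous coupling between LMC and the continuous Langevin diffusion. Since $f$ is $1$-Lipschitz, the Kantorovich--Rubinstein duality yields $|\pi(f) - \pi P_\ell(f)| \leq W_1(\pi, \pi P_\ell)$. I would realize this distance by taking $\theta_0 \sim \pi$ and running, with the same Brownian motion, both the discrete LMC chain $(\theta_k)$ and the continuous Langevin diffusion $(\vartheta(t))$ driven by \eqref{equ: langevin sde}. Since $\pi$ is invariant for the continuous dynamics, $\vartheta(\ell h) \sim \pi$, so $W_1(\pi, \pi P_\ell) \leq \EE{\|\theta_\ell - \vartheta(\ell h)\|_2}$.

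Next I would derive a per-step recursion for the coupling error $D_k = \theta_k - \vartheta(kh)$. Writing the LMC update as $\theta_{k+1} = \theta_k - h\nabla U(\theta_k) + \sqrt{2}(W_{(k+1)h} - W_{kh})$ (which is valid under the synchronous coupling), and using the integral form of $\vartheta((k+1)h)$, the Brownian contributions cancel and
\begin{align*}
D_{k+1} = \bigl[D_k - h(\nabla U(\theta_k) - \nabla U(\vartheta(kh)))\bigr] + E_{k+1},
\quad E_{k+1} := \int_{kh}^{(k+1)h} \bigl(\nabla U(\vartheta(s)) - \nabla U(\vartheta(kh))\bigr)\,\rd s.
\end{align*}
The bracketed term is exactly the difference of $\psi(\cdot,\bfu)$ evaluated at $\theta_k$ and $\vartheta(kh)$, so the contraction inequality \eqref{equ: contraction} gives $\|D_{k+1}\|_2 \leq \rho\|D_k\|_2 + \|E_{k+1}\|_2$. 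Starting from $D_0 = 0$, summing the geometric series and taking expectations yields $\EE{\|D_\ell\|_2} \leq \frac{1}{1-\rho}\max_k \EE{\|E_k\|_2} = \frac{1}{hM}\max_k \EE{\|E_k\|_2}$.

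It remains to bound the local one-step discretization error $\EE{\|E_{k+1}\|_2}$. Since $\vartheta(kh) \sim \pi$, I would apply $L$-smoothness to get $\|E_{k+1}\|_2 \leq L\int_{kh}^{(k+1)h}\|\vartheta(s) - \vartheta(kh)\|_2\,\rd s$, and then expand $\vartheta(s) - \vartheta(kh) = -\int_{kh}^s \nabla U(\vartheta(r))\rd r + \sqrt{2}(W_s - W_{kh})$. Using the standard $L$-smooth log-concave identity $\mathbb{E}_\pi\|\nabla U\|_2^2 = \mathbb{E}_\pi[\Delta U] \leq Ld$ (by integration by parts) and $\mathbb{E}\|W_s - W_{kh}\|_2^2 = (s-kh)d$ controls the drift and diffusion contributions respectively, leading to a bound on $\EE{\|E_{k+1}\|_2}$ of the order $Lh^{3/2}\sqrt{d}$, with leading coefficient $\frac{3\sqrt{2}}{2}$ after careful integration and use of the step-size constraint $h \leq 2/(L+M)$. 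Plugging this into the recursion gives the factor $\frac{L}{M}h^{1/2}d$ (with $\sqrt{d}$ absorbed into $d$), yielding the stated bound.

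The main obstacle will be the careful bookkeeping in the local-error step: one must handle both the drift integral and the Brownian integral in $\vartheta(s) - \vartheta(kh)$, track the dependence on $h$, $L$, $M$, and $d$ through the step-size regime $h \leq 2/(L+M)$, and make the constants land on $\frac{3\sqrt{2}}{2}$. The contraction step and the Kantorovich--Rubinstein reduction are routine once the coupling is set up; the delicate work is in obtaining a one-step error of the right order $h^{3/2}$ using the stationarity of $\vartheta(kh)$ under $\pi$ and the a priori moment bound $\mathbb{E}_\pi\|\nabla U\|_2^2 \leq Ld$.
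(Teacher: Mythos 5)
Your proposal is correct and follows essentially the same route as the paper's proof: a synchronous coupling of the LMC chain with the stationary continuous Langevin diffusion, a one-step recursion $\|D_{k+1}\|\le\rho\|D_k\|+\|E_{k+1}\|$ via the contraction inequality \eqref{equ: contraction}, a local error bound of order $Lh^{3/2}$ obtained from $L$-smoothness, stationarity, and $\EE[\pi]{\|\nabla U\|_2^2}\le Ld$, and a geometric sum giving the factor $1/(1-\rho)=1/(hM)$. The only cosmetic difference is that you phrase the reduction explicitly through $W_1$ and Kantorovich--Rubinstein duality, whereas the paper applies the Lipschitz property of $f$ directly at the end.
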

\begin{proof}[Proof of Lemma~\ref{lemma: discretization error}]
We let $\theta(t)$ be the continuous-time Langevin diffusion with $\theta(0)=\theta_0\sim\pi$, $W_{t_{k+1}}-W_{t_k}=\sqrt h\xi_{k+1}$, where $\xi_{k+1}\iid\N(0, I_d)$, $t_k=kh$. So we have
\begin{align*}
\theta(t_{k+1})=\theta(t_k)-\int_{t_k}^{t_{k+1}}\nabla U(\theta(s))\rd s + \sqrt{2h}\xi_{k+1}
\end{align*}
and
\begin{align*}
\theta_{k+1}=\theta_k-h \nabla U(\theta_k) + \sqrt{2h} \xi_{k+1}.
\end{align*}
Combing the previous two equations gives
\begin{align*}
\theta(t_{k+1})-\theta_{k+1}&=\theta(t_k)-\theta_k - h[\nabla U(\theta(t_k)) - \nabla U(\theta_k) ] - \int_{t_k}^{t_{k+1}} \nabla U(\theta(s)) -\nabla U(\theta(t_k))\rd s.
\end{align*}
Let $\Delta_{k}=\theta(t_k)-\theta_k$. The last display reads
\begin{align*}
\Delta_{k+1}=\Delta_k - h[\nabla U(\theta_k + \Delta_k) - \nabla U(\theta_k) ]-\int_{t_k}^{t_{k+1}} \nabla U(\theta(s)) - \nabla U(\theta(t_k))\rd s.
\end{align*}
By the contracting property \eqref{equ: contraction} in the main paper,
\[
    \|\Delta_k - h[\nabla U(\theta_k + \Delta_k) - \nabla U(\theta_k) ]\|\leq \rho \|\Delta_k\|.
\]
Taking expectation and use $L$-smoothness of $U$, we have
\begin{align*}
\EE{\|\Delta_{k+1}\| } \leq \rho \EE {\|\Delta_k\|} + L \int_{t_k}^{t_{k+1}}\EE{ \|\theta(s) - \theta(t_k)\| }\rd s. 
\end{align*}
By Lemma 3 of \cite{dalalyan2019user}, $\EE{\|\nabla U(\theta)\|^2_2}\leq Ld$. So we have $\EE{\|\nabla U(\theta)\|}\leq \sqrt{d\EE{\|\nabla U(\theta)\|_2^2}}\leq \sqrt{L}d $.
Because $\theta(t)$ is a stationary process,
\begin{align*}
\int_{t_k}^{t_{k+1}}\EE{ \|\theta(s) - \theta(t_k)\| }\rd s &=\int_0^h \EE{\|\theta(t) - \theta(0)\| }\rd t\\
&=\int_0^h\EE{\|-\int_0^t \nabla U(\theta(s))\rd s + \sqrt{2}W_t \|   }\rd t\\
&\leq \int_0^h \int_0^t \EE{\|\nabla U(\theta(s))\|} \rd s \rd t + \int_0^h \sqrt{2}\EE{\|W_t\|} \rd t\\
&=\frac{h^2}{2} \sqrt{L}d + \int_0^h \sqrt{2t} \EE{\|\xi_1\|}   \rd t.
\end{align*}
Note that
\begin{align*}
\EE{\|\xi_1\|}=\sqrt 2\frac{\Gamma(d/2+1/2)}{\Gamma(d/2)}\leq \sqrt 2(\frac{d+1}2)^{1/2}=\sqrt{d+1}.
\end{align*}
Thus,
\begin{align*}
    \int_{t_k}^{t_{k+1}}\EE{ \|\theta(s) - \theta(t_k)\| }\rd s &\leq \frac12 L^{1/2}h^2d + \frac{3\sqrt 2}{2} h^{3/2} d^{1/2}\\
    &\leq \frac{\sqrt 2}{2}h^{3/2}d + \frac{3\sqrt 2}{2} h^{3/2} d^{1/2}\\
    &\leq \frac{3\sqrt 2}{2}h^{3/2}d.
\end{align*}
Denote $r=\frac{3\sqrt 2}{2}L h^{3/2}d$. So
\begin{align*}
\EE{\|\Delta_{k+1}\|}&\leq \rho\EE{\|\Delta_k\|} + r \leq \rho^{k+1}\EE{\|\Delta_0\|} + \sum_{i=0}^k \rho^i r\\
&\leq  \frac{r}{1-\rho}=\frac{3\sqrt 2}{2}\frac{L}{M} h^{1/2}d 
\end{align*}

Therefore, for any $k\geq 1$,
\begin{align*}
|\pi(f)-\pi P_k(f)|&=|\EE{f(\theta(t_{k})) - \EE{f(\theta_k)} }\leq \EE{|f(\theta(t_{k})) - f(\theta_k)|  }\\
&\leq \EE{\|\Delta_k\|}\leq \frac{3\sqrt 2}{2}\frac{L}{M} h^{1/2}d .
\end{align*}

\end{proof}

\begin{theorem}[Theorem 9.8 of \cite{niederreiter1992random}]
\label{thm: disc lfsr}
Let $v_0,v_1,\ldots$ be an LFSR with offset $s$ and period $n=2^m-1$ which satisfy $\text{gcd}(m,n)=1$. Then the sequence $\{\bfu_i\}_{i=0}^{n-1}\subset[0,1]^s$ with $\bfu_i=(v_i,v_{i+1},\ldots,v_{i+s-1})$ has, on average, star-discrepancy 
\[
O(n^{-1}(\log n)^{d+1} \log \log n )
\]
with an implied constant depending only on $d$ and the average is taken over all primitive polynomials over GF(2) of degree $m$.
\end{theorem}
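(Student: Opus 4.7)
The plan is to deduce Theorem~\ref{thm: disc lfsr} from the Erdős–Turán–Koksma inequality, which bounds the star discrepancy of an $n$-point set in $[0,1]^s$ by
\[
D_n^* \;\lesssim\; \frac{1}{H} + \sum_{0<\|\mathbf{h}\|_\infty\leq H} \frac{1}{r(\mathbf{h})}\cdot \frac{|S(\mathbf{h})|}{n},
\]
where $r(\mathbf{h})=\prod_{j=1}^s \max(1,|h_j|)$, $H$ is a free truncation parameter, and $S(\mathbf{h})=\sum_{i=0}^{n-1}\exp(2\pi \mathrm{i}\, \mathbf{h}\cdot \bfu_i)$. Thus it suffices to control $|S(\mathbf{h})|$ for nonzero integer $\mathbf{h}$ with $\|\mathbf{h}\|_\infty\leq H$, on average over the primitive polynomials of degree $m$ over GF$(2)$.

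First I would rewrite $S(\mathbf{h})$ in terms of the underlying binary LFSR state. Because $v_i=\sum_{j=0}^{m-1} b_{si+j}2^{-j-1}$ and the bits $b_i$ come from a GF$(2)$-linear recurrence, each coordinate of $\bfu_i$ is a GF$(2)$-linear function of the LFSR state $\mathbf{b}_i\in \text{GF}(2)^m$, and hence $\mathbf{h}\cdot\bfu_i\bmod 1$ expands dyadically into GF$(2)$-linear forms in $\mathbf{b}_i$. Expanding the complex exponential into products of Walsh characters then converts $S(\mathbf{h})$ into a weighted combination of character sums of the form $\sum_{i=0}^{n-1}(-1)^{L_{\mathbf{h}}(\mathbf{b}_i)}$, where $L_{\mathbf{h}}$ ranges over a small collection of GF$(2)$-linear forms determined by $\mathbf{h}$ and its dyadic components.

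The next step exploits the fact that for a primitive polynomial of degree $m$, the state $\mathbf{b}_i$ visits every nonzero element of $\text{GF}(2)^m$ exactly once during one period, so that any nontrivial linear form $L$ takes each value in GF$(2)$ essentially $n/2$ times, yielding $\bigl|\sum_i(-1)^{L(\mathbf{b}_i)}\bigr|\leq 1$. The difficulty is that certain "bad" $\mathbf{h}$ may force $L_{\mathbf{h}}\equiv 0$ on the orbit for the particular recurrence at hand, in which case $|S(\mathbf{h})|$ can be as large as $n$. To handle this, I would average $|S(\mathbf{h})|$ over the $\phi(2^m-1)/m$ primitive polynomials of degree $m$: by a Möbius-type counting argument, the subset of primitive polynomials that make a fixed nonzero $\mathbf{h}$ degenerate has negligible density, so the average of $|S(\mathbf{h})|$ is $O(1)$ uniformly in $\mathbf{h}$.

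Plugging this averaged bound back into the Erdős–Turán–Koksma inequality, summing $1/r(\mathbf{h})$ over $\|\mathbf{h}\|_\infty\leq H$ (contributing $(\log H)^s$), and choosing $H\asymp n$ yields the target rate; one additional log factor comes from the $1/H$ truncation and the $\log\log n$ from the finer dyadic structure introduced by the binary expansion together with arithmetic cancellations involving $\phi(2^m-1)$. The main obstacle is the averaging step: identifying exactly which primitive polynomials make a given linear form vanish on the orbit, and squeezing the correct log power and $\log\log n$ out of the resulting arithmetic sums over the set of primitive polynomials, requires careful use of the algebraic structure of GF$(2)$ and is where essentially all of the nontrivial work lies.
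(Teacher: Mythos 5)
First, note that the paper does not prove this statement at all: it is imported verbatim as Theorem 9.8 of Niederreiter (1992), so there is no internal proof to compare against. Your outline does follow the broad strategy of Niederreiter's actual argument (a discrepancy bound of Erd\H{o}s--Tur\'an--Koksma type, reduction of the exponential sums to the full-period balance property of maximal-length shift registers, and an average over the $\phi(2^m-1)/m$ primitive polynomials to dispose of the resonant frequencies), but as written it has two genuine gaps, and they sit exactly where the theorem lives.

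The first gap is the reduction of $S(\mathbf{h})=\sum_i e^{2\pi\mathrm{i}\,\mathbf{h}\cdot\bfu_i}$ to $\pm1$-valued character sums of GF$(2)$-linear forms in the state $\mathbf{b}_i$. This is not a formality: $\mathbf{h}\cdot\bfu_i \bmod 1$ is a sum of real numbers, and binary addition produces carries, so it is not a GF$(2)$-linear function of the bits. The classical trigonometric Erd\H{o}s--Tur\'an--Koksma inequality is therefore the wrong tool for a digital construction; one needs either the Walsh-function analogue of the inequality from the start (so that the relevant characters really are $(-1)^{L(\mathbf{b}_i)}$), or, as Niederreiter does, a discrepancy bound tailored to digital point sets in which the exponential sums are genuinely over GF$(2)$. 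The second gap is that the averaging step is asserted rather than performed. The claim that the average of $|S(\mathbf{h})|$ over primitive polynomials is $O(1)$ uniformly in $\mathbf{h}$ is not what one actually proves; the quantity that gets averaged is the whole weighted sum over $\mathbf{h}$, and the $\log\log n$ in the final bound arises from the elementary estimate $(2^m-1)/\phi(2^m-1)=O(\log\log n)$ when the total contribution of the ``bad'' pairs (polynomial, $\mathbf{h}$) is divided by the number $\phi(2^m-1)/m$ of primitive polynomials. Identifying and counting those bad pairs is the entire content of the theorem; you yourself flag it as where essentially all of the nontrivial work lies, which is correct, but it means the proposal is a plan rather than a proof. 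Given that the paper treats this result as a black-box citation, the honest resolution is either to cite Niederreiter's Theorem 9.8 as the paper does, or to carry out the Walsh-character reduction and the counting argument in full.
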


\begin{proof}[Proof of Theorem \ref{thm}]
The error on the left-hand-side is bounded by
\[
(\Rom{1}) + (\Rom{2})' + (\Rom{2})'' + \frac{4\ell}{n}\|f\|_\infty.
\]

Lemma~\ref{lemma: bound error 1} shows that $(\Rom{1})\leq (\max_{0\leq i<n} \|\theta_i\| +\EE[\pi]{\|\theta\|}) \rho^{\ell} \leq (\max_{0\leq i\leq n} \|\theta_i\| +\EE[\pi]{\|\theta\|}) h^{1/2} $ since  $\ell=\lceil (1/2) \log_\rho h \rceil$. 
Lemma~\ref{lemma: discretization error} shows that $(\Rom{2})'\leq \frac{3\sqrt 2}{2} \frac{L}{M} d h^{1/2}$. Denote $C_2=\max_{0\leq i\leq n} \|\theta_i\| +\EE[\pi]{\|\theta\|} + \frac{3\sqrt 2}{2}\frac{L}{M} d $. So $(\Rom{1})+(\Rom{2})'\leq C_2 h^{1/2}$.

By Theorem~\ref{thm: disc lfsr} and the condition that $\gcd(d\ell,n)=1$, the star-discrepancy $D^*(\{\bar w_k^{(\ell)} \}_{k\geq 1} )$ is upper bounded by $O(n^{-1}(\log n)^{d\ell+1} \log\log n )$.
Finally, by Koksma-Hlawka inequality, we have $(\Rom{2})'' \leq \|\bar f_\ell\|_{\text{HK}}\cdot D^*(\{\bar w_k^{(\ell)} \}_{k\geq 1} )$. Thus, $(\Rom{2})''+\frac{4\ell}{n}\|f\|_\infty \leq C_1 n^{-1+\delta}$, where $\delta$ hides the poly-logarithmic terms in $\log n$ and $C_1$ depends on $d,\ell,\|\bar f_\ell\|_{\text{HK}}$.

Therefore, the upper bound becomes
\begin{align*}
(\Rom{1}) + (\Rom{2})' + (\Rom{2})'' + \frac{4\ell}{n}\|f\|_\infty&\leq C_1 n^{-1+\delta} + C_2 h^{1/2}.
\end{align*}

\end{proof}

\section{Additional numerical results}
\label{sec: additional numerical}

The primary contribution of this work is to improve LMC as a Monte Carlo sampling algorithm, not as an optimization algorithm. Therefore, our main focus is on providing a better estimation of $\pi(f)$ for some function of interest. Downstream tasks relying on such expectations can also benefit from LQMC. For posterior prediction, it is essential to recognize that the prediction error is not solely determined by the sampling method. Even with infinite perfect samples from the posterior, the prediction error can still arise due to model misspecification, noisy data, biased sampling, etc. So the improvement achieved by LQMC might be less pronounced when assessing the prediction error.

To investigate the performance of LQMC in a posterior prediction setting, we conducted experiments similar to those presented in \cite{dubey2016variance} using three UCI datasets. 
Each dataset was split into a training set (70\%), a validation set (10\%), and a test set (20\%). We performed a tuning process for the constant step size on a grid using the validation set and evaluated the prediction error on the test set.  Each iteration computes the stochastic gradient using 32 data points sampled at random. Details of the datasets are in Table~\ref{table: datasets}. 

\begin{table}[!h]
\centering
\begin{tabular}{c|ccc}
\toprule
Datasets & Parkinsons & Bike & Protein\\
\midrule
$N$ (number of instances) & 5875 & 17379 & 45730 \\
$p$ (number of features) & 21 & 12 & 9 \\
\bottomrule
\end{tabular}
\caption{Summary of datasets used for Bayesian posterior prediction.}
\label{table: datasets}
\end{table}

The results are presented in Figure~\ref{fig: prediction error}. The $x$-axes represent the total number of iterations of Langevin algorithm and the $y$-axes represent the test error. The error bars represent the variation across 10 random replicates.
It is evident that LQMC reduced the test error, although the improvement is not substantial. This aligns with our initial expectation, as the proposed method primarily enhances the accuracy of estimating the posterior mean. However, the test error often consists of other sources of error, thus the improvement achieved by the proposed method in reducing the test error might be limited.

\begin{figure}
\centering
\begin{subfigure}{.32\textwidth}
\includegraphics[width=\textwidth]{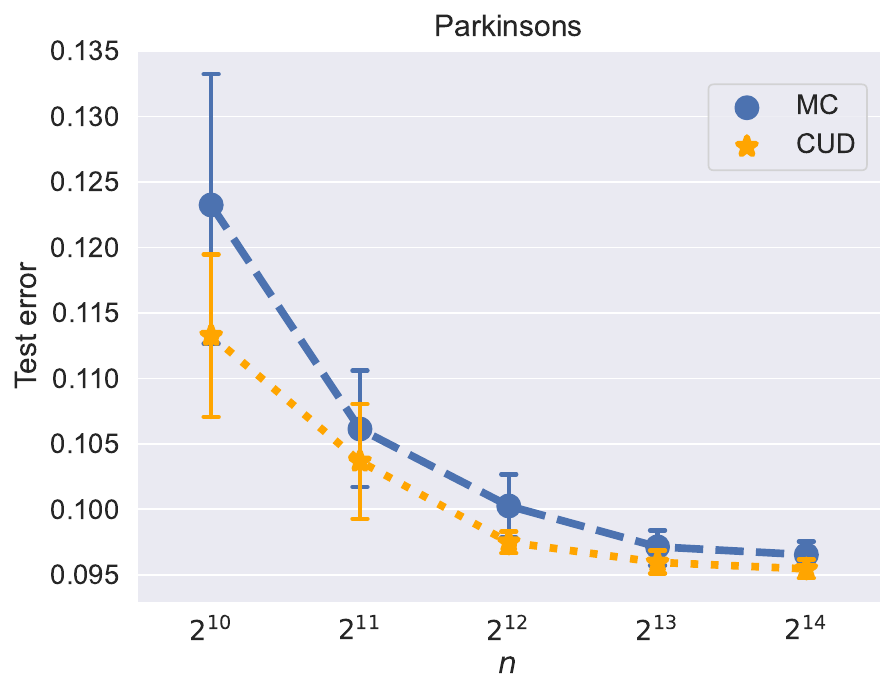}
\end{subfigure}
\begin{subfigure}{.32\textwidth}
\includegraphics[width=\textwidth]{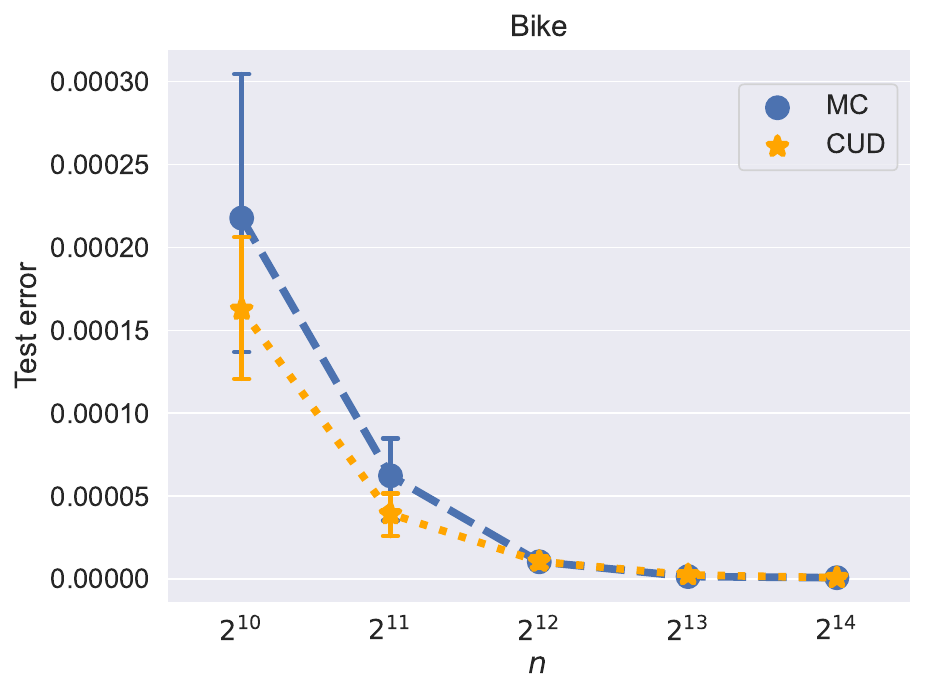}
\end{subfigure}
\begin{subfigure}{.32\textwidth}
\includegraphics[width=\textwidth]{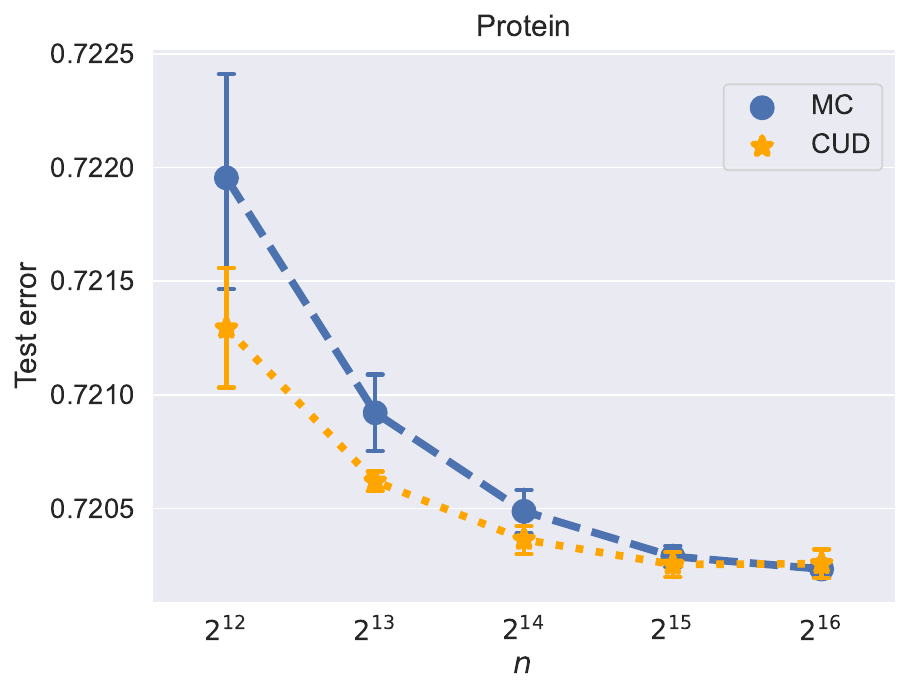}
\end{subfigure}
\caption{Test error versus number of iterations for the three UCI datasets. 
 }
\label{fig: prediction error}
\end{figure}

\end{document}